\documentclass[runningheads,a4paper]{lmcs}
\pdfoutput=1

\usepackage{amssymb}
\setcounter{tocdepth}{3}
\usepackage{enumerate}
\usepackage[colorlinks=true]{hyperref}
\usepackage{tikz}
\usepackage{xcolor,latexsym,amsmath,extarrows,alltt}
\usepackage{xspace}
\usepackage{booktabs}
\usepackage{mathtools}
\usepackage{enumitem}
\usepackage{stmaryrd}
\usepackage{microtype}
\usepackage{bussproofs}
\usepackage{multirow}
\usepackage{wrapfig}

\theoremstyle{theorem}\newtheorem{theorem}{Theorem}
\theoremstyle{theorem}\newtheorem{lemma}[theorem]{Lemma}
\theoremstyle{theorem}
\theoremstyle{theorem}\newtheorem{proposition}[theorem]{Proposition}
\theoremstyle{definition}\newtheorem{definition}[theorem]{Definition}
\theoremstyle{definition}\newtheorem{algorithm}[theorem]{Algorithm}
\theoremstyle{definition}\newtheorem{example}[theorem]{Example}
\theoremstyle{definition}

\newcommand{\N}{\mathbb{N}}

\newcommand{\B}{\mathcal{B}}
\newcommand{\F}{\mathcal{F}}
\newcommand{\M}{\mathcal{M}}
\newcommand{\V}{\mathcal{V}}

\newcommand{\Constructors}{\mathcal{C}}
\newcommand{\Defineds}{\mathcal{D}}
\newcommand{\Var}{\mathit{Var}}
\newcommand{\Data}{\texttt{Data}}

\newcommand{\Card}{\mathsf{Card}}

\newcommand{\numrep}[1]{[#1]}
\newcommand{\ntimecomp}[1]{\ensuremath{\textrm{NTIME}\left(#1\right)}}
\newcommand{\exptime}[1]{\mathsf{EXP}^{#1}\mathsf{TIME}}

\newcommand{\asort}{\iota}
\newcommand{\asortorpair}{\kappa}
\newcommand{\atype}{\sigma}
\newcommand{\btype}{\tau}

\newcommand{\aindex}{l}
\newcommand{\bindex}{p}
\newcommand{\cindex}{q}
\newcommand{\dindex}{r}
\newcommand{\prog}{\mathsf{p}}

\newcommand{\arity}{\mathtt{arity}_\prog}
\newcommand{\vdashcall}{\vdash^{\mathtt{call}}}

\newcommand{\vvdash}{\Vdash}
\newcommand{\vvdashcall}{\Vdash^{\mathtt{call}}}

\newcommand{\apps}[3]{#1\ #2 \cdots #3}

\newcommand{\symb}[1]{\mathtt{#1}}
\newcommand{\interpret}[1]{\llbracket #1 \rrbracket_\B}

\newcommand{\down}[2]{#1\!\Downarrow\!#2}
\newcommand{\arrtype}{\Rightarrow}
\newcommand{\arrr}{\to}
\newcommand{\too}{\Rightarrow}
\newcommand{\blank}{\textbf{\textvisiblespace}}
\newcommand{\nul}{\symb{0}}

\newcommand{\nil}{\symb{[]}}
\newcommand{\cons}{\symb{::}}
\newcommand{\strue}{\symb{true}}
\newcommand{\sfalse}{\symb{false}}
\newcommand{\suc}{\symb{s}}
\newcommand{\bits}{\symb{list}}
\newcommand{\bool}{\symb{bool}}

\newcommand{\msymbol}{\symb{symbol}}
\newcommand{\mstate}{\symb{state}}
\newcommand{\mdirec}{\symb{direc}}

\newcommand{\ifte}[3]{\mathtt{if} \, #1 \, \mathtt{then} \, #2 \,
                    \mathtt{else} \, #3}
\newcommand{\choice}{\mathtt{choose}}
\newcommand{\identifier}[1]{\mathtt{#1}}
\newcommand{\progeval}[2]{\llbracket #1 \rrbracket(#2)}

\newcommand{\progresult}{\llbracket \prog \rrbracket(d_1,\dots,d_M)
\mapsto b}
\newcommand{\pclass}{\mathsf{P}}
\newcommand{\npclass}{\mathsf{NP}}
\newcommand{\neclass}{\mathsf{NEXP}}
\newcommand{\netime}[1]{\mathsf{NEXP}^{(#1)}}
\newcommand{\etime}[1]{\mathsf{EXP}^{(#1)}}
\newcommand{\expclass}{\mathsf{EXP}}

\newcommand{\elementary}{\mathsf{ELEMENTARY}}
\newcommand{\complexityfun}{h}

\newcommand{\typeorder}[1]{\ensuremath{\mathit{ord}\!\left(#1\right)}}

\begin{document}

\title{Cons-free programming with immutable functions}
\author[C.~Kop]{Cynthia Kop}
\address{Department of Computer Science, Copenhagen University $\star$}
\email{kop@di.ku.dk}
\thanks{$\star$ Supported by the Marie Sk{\l}odowska-Curie
action ``HORIP'', program H2020-MSCA-IF-2014, 658162.}

\begin{abstract}
We investigate the power of non-determinism in purely functional
programming languages with higher-order types.
Specifically, we set out to characterise the hierarchy
$$
\npclass \subsetneq \neclass \subsetneq \netime{2} \subsetneq \cdots
\subsetneq \netime{k} \subsetneq \cdots
$$
solely in terms of higher-typed, purely functional programs.
Although the work is incomplete, we present an initial approach using
\emph{cons-free programs with immutable functions}.
%The three main technical artifacts are that programs be
%\emph{cons-free}, \emph{terminating} and \emph{order $k+1$ immutable}.
\end{abstract}

\maketitle

\section{Introduction}

In~\cite{jon:01}, Jones introduces \emph{cons-free programming}:
working with a small functional programming language, cons-free
programs are defined to be \emph{read-only}: recursive data cannot
be created or altered (beyond taking sub-expressions), only read from
the input.  By imposing further restrictions on data order and
recursion style, classes of cons-free programs turn out to characterise
various classes in the time and space hierarchies of computational
complexity.  However, this concerns only \emph{deterministic} classes.

It is tantalising to consider the non-deterministic classes such as
$\npclass$: is there a way to characterise these using cons-free
programs?  Unfortunately, merely adding non-determinism to Jones'
language does not suffice: cons-free programs with data order $0$
characterise $\pclass$ whether or not a non-deterministic choice
operator is included in the language~\cite{bon:06}, and for higher
data orders $K$ adding such an operator increases the expressivity
from $\exptime{K}$ to $\elementary$~\cite{kop:sim:17}.
Thus, additional language features or limitations are needed.

In this work, we explore cons-free programs with \emph{immutable
functions}, where data of higher type may be created but not
manipulated afterwards.  We present some initial ideas to obtain both
a characterisation of $\npclass$ by terminating cons-free programs
with immutable functions, and a generalisation towards all classes in
the hierarchy
$$
\npclass \subsetneq \neclass \subsetneq \netime{2} \subsetneq \cdots
\subsetneq \netime{k} \subsetneq \cdots
$$

This is a work in progress; the core results have not yet been fully
proven correct, and definitions may be tweaked.  The goal is not only
to find a characterisation of the hierarchy above, but also to
identify the difficulties on the way.  It is not unlikely that this
could help to find other interesting characterisations, and highlight
properties of non-determinism.

\section{Preliminaries}

\emph{A more elaborate presentation of the subjects discussed in this
section is available in~\cite{kop:sim:17}.}

\subsection{Turing Machines and complexity}

We assume familiarity with standard notions of Turing Machines and
complexity classes (see, e.g.,
\cite{Papadimitriou:complexity,Jones:CompComp});
here, we fix notation.

\medskip
Turing Machines (TMs) are triples $(A,S,T)$ of finite sets of
\emph{tape symbols}, \emph{states} and \emph{transitions}, where
$A \supseteq \{0,1,\blank\}$,\ 
$S \supseteq \{ \symb{start}, \symb{accept}, \symb{reject} \}$ and
$T$ contains tuples $(i,r,w,d,j)$ with $i \in S \setminus
\{\symb{accept},\symb{reject}\}$ (the \emph{original state}), $r \in
A$ (the \emph{read symbol}), $w
\in A$ (the \emph{written symbol}), $d \in \{ \symb{L},\symb{R} \}$
(the \emph{direction}), and $j \in S$ (the \emph{result state}).
Every TM in this paper has a single, right-infinite tape.
A TM \emph{accepts} a decision problem $X \subseteq \{0,1\}^+$ if
for any $x \in \{0,1\}^+$: $x \in X$ if{f} there is an evaluation
starting on the tape $\blank x_1\dots x_n\blank\blank\dots$ which
ends in the $\symb{accept}$ state.
For $\complexityfun : \N \longrightarrow \N$ a function, a TM
$\mathcal{M}$ \emph{runs in time} $\lambda n.\complexityfun(n)$ if
for $x \in \{0,1\}^n$: if $\mathcal{M}$ accepts $x$, then this can be
done in at most $\complexityfun(n)$ steps.

\medskip
Let $\complexityfun : \N \rightarrow \N$ be a function.
Then, $\ntimecomp{\complexityfun(n)}$ is the set of all $X \subseteq
\{0,1\}^+$ such that there exist $a > 0$ and a TM
running in time $\lambda n.a \cdot \complexityfun(n)$ that accepts
$X$.

For $K,n \geq 0$, let $\exp_2^0(n) = n$ and $\exp_2^{K+1}(n) =
\exp_2^K(2^n) = 2^{\exp_2^K(n)}$.
For $K \geq 0$, define
$
\netime{K} \triangleq \bigcup_{a,b \in \N}
\ntimecomp{\textrm{exp}_2^{K}(an^b)}
$.
Let $\elementary \triangleq \bigcup_{K \in \N} \netime{K}$.

\subsection{Non-deterministic programs}\label{subsec:programs}

We consider functional programs with simple types and call-by-value
evaluation.  Data constructors (denoted $\identifier{c}$) are at
least $\strue,\sfalse : \bool$,
$\nil : \bits$ and $\cons : \bool \arrtype \bits \arrtype \bits$
(denoted infix), although others are allowed.
\begin{wrapfigure}{r}{7.7cm}
\vspace{-12pt}
\fbox{
\begin{tabular}{rcl}
$d,b \in \texttt{Data}$ & ::= & $\apps{\identifier{c}}{d_1}{d_m} \mid
  (d,b)$ \\
$v,w \in \texttt{Value}$ & ::= & $d \mid (v,w) \mid
\apps{\identifier{f}}{v_1}{v_n}$ \\
& & ($n < \arity(\identifier{f})$) \\
\end{tabular}
}
\vspace{-12pt}
\end{wrapfigure}
There is no pre-defined integer datatype.
Notions of \emph{data} and \emph{values} are given by the grammar to
the right, where $\identifier{f}$ indicates a function symbol defined
by one or more clauses.
The language supports $\ifte{}{}{}$ statements, but has no
$\symb{let}$ construct.
For a program $\prog$ with \emph{main function} $\identifier{f}_1 :
\asort_1 \arrtype \dots \arrtype \asort_M \arrtype \asortorpair$---%
where $\asortorpair$ and each $\asort_i$ have type order $0$---and
\emph{input data} $d_1,\dots,d_M$, $\prog$ has \emph{result value}
$b$ if there is an evaluation $\apps{\identifier{f}_1}{d_1}{d_M} \arrr
b$.

There is also a non-deterministic choice construct:
$\apps{\choice}{s_1}{s_m}$ may evaluate to a value $v$ if some $s_i$
does.  Thus, a program can have multiple result values on the same
input.

\medskip
A program $\prog$ with main function $\identifier{f}_1 : \bits \arrtype
\bool$ \emph{accepts} a decision problem $X$ if for all $x = x_1\dots
x_n \in \{0,1\}^*$: $x \in X$ if{f} $\prog$ has result value $\strue$
on input $\overline{x_1}\cons\dots\cons\overline{x_n}\cons\nil$, where
$\overline{1} = \strue$ and $\overline{0} = \sfalse$.  It is not
necessary for $\strue$ to be the \emph{only} result value.

\subsection{Cons-free programs}

A clause $\apps{\identifier{f}}{\ell_1}{\ell_k} = s$ is
\emph{cons-free} if for all sub-expressions $t$ of $s$: if $t =
\apps{\identifier{c}}{t_1}{t_m}$ with $\identifier{c}$ a data
constructor, then $t \in \Data$ or $t$ also occurs as a sub-expression
of some $\ell_i$.
A program is cons-free if all its clauses are.

Intuitively, in a cons-free program no new recursive data can be
created: all data encountered during evaluation occur inside the
input, or as part of some clause.

\begin{example}
The clauses for $\symb{last}$ below are cons-free; the clauses for
$\symb{flip}$ are not. \\
\[
\begin{array}{ll}
\symb{last}\ (x\cons \nil) = x &
\symb{flip}\ \nil = \nil \\
\symb{last}\ (x\cons y\cons zs) = \symb{last}\ (y\cons zs) &
\symb{flip}\ (\strue\cons xs) = \sfalse\cons (\symb{flip}\ xs) \\
& \symb{flip}\ (\sfalse\cons xs) = \strue\cons (\symb{flip}\ xs) \\
\end{array}
\]
\end{example}

\subsection{Counting}\label{subsec:counting}
Cons-free programs neither have an integer data
type, nor a way to construct unbounded recursive data (e.g., we cannot
build $\nul,\ \suc\ \nul,\ \suc\ (\suc\ \nul)$ etc.).  It \emph{is},
however, possible to design cons-free programs operating on certain
bounded classes of numbers: by representing numbers as values using
the input data.  For example, given a list $cs$ of length $n$ as
input:
\begin{enumerate}
\item\label{count:list}
  numbers $i \in \{0,\dots,n\}$ can be represented as lists of
  length $i$ (sub-expressions of $cs$);
\item\label{count:base}
  numbers $i \in \{0,\dots,4 \cdot (n+1)^2-1\}$ can be represented
  as tuples $(l_1,l_2,l_3) : \bits \times \bits \times \bits$%, where
  %$l_2$ and $l_3$ are sub-expressions of $cs$ and $l_1$ is a
  %sub-expression of $\strue\cons\strue\cons\strue\cons\nil$: here,
  :
  writing $i = k_1 \cdot (n+1)^2 + k_2 \cdot (n+1) + k_3$, the number
  $i$ is represented by a tuple $(l_1,\dots,l_3)$ such that each $l_i$
  has length $k_i$;
\item\label{count:exp}
  numbers $i \in \{0,\dots,2^{4 \cdots (n+1)^2}-1\}$ can be
  represented by values $v : (\bits \times \bits \times \bits)
  \arrtype \bool$: writing $i_0\dots i_{4 \cdot (n+1)^2-1}$ for the
  bitvector corresponding to $i$, it is represented by any value $v$
  such that $v\ [j] \arrr \strue$ if{f} $i_j = 1$, where $[j]$ is the
  representation of $j \in \{0,\dots,4 \cdot (n+1)^2-1\}$ as a tuple
  from point (\ref{count:base}).
\end{enumerate}

Building on (\ref{count:exp}), numbers in $\{0,\dots,\exp_2^K(a n^b)
\}$ can be represented by values of type order $K$.  It is not hard to
construct cons-free rules to calculate successor and predecessor
functions, and to test whether a number representation corresponds to
$0$.

Jones~\cite{jon:01} uses these number representations and counting
functions to write a cons-free program with data order $K$ which
simulates a given TM running in at most $\exp_2^K(a n^b)$ steps.
However, this program relies heavily on the machine being
deterministic.

\section{Characterising $\npclass$}

To characterise non-deterministic classes we start by countering this
problem: we present a cons-free program which determines the final
state of a non-deterministic TM running
in $\lambda n.\complexityfun(n)$ steps, given number representations
for $i \in \{0,\dots,\complexityfun(n)\}$ and counting functions.

For a machine (A,S,T), let $C$ be a fixed number such that for every
$i \in S$ and $r \in A$ there are at most $C$ different triples
$(w,d,j)$ such that $(i,r,w,d,j) \in T$.
Let $T'$ be a set of tuples $(k,i,r,w,d,j)$ such that (a)
$T = \{ (i,r,w,d,j) \mid (k,i,r,w,d,j)$ occurs in $T' \}$ and (b)
each combination $(k,i,r)$ occurs at most once in $T'$.
Now, our simulation uses the data constructors:
$\strue : \bool,\ \sfalse : \bool,\ \nil : \bits$ and $\cons :
  \bool \arrtype \bits \arrtype \bits$ noted in 
  Section~\ref{subsec:programs};
$\symb{a} : \msymbol$ for $a \in A$ (writing $\symb{B}$ for the
  blank symbol), $\symb{L},\symb{R} : \mdirec$ and
  $\symb{s} : \mstate$ for $s \in S$;
$\symb{action} : \msymbol \arrtype \mdirec \arrtype \mstate
  \arrtype \symb{trans}$;
$\symb{x1} : \symb{option}$, \dots, $\symb{xC} : \symb{option}$;
and
$\symb{end} : \symb{state} \arrtype \symb{trans}$.
The rules to simulate the machine are given in
Figure~\ref{fig:machine}.

\newcommand{\extend}[1]{\multicolumn{2}{l}{#1}}

\begin{figure}[!ht]
\begin{tabular}{ll}
\extend{$\symb{run}\ cs = \symb{test}\ (\symb{state}\ cs\ 
(\symb{rndf}\ cs\ \numrep{\complexityfun(|cs|)})\ 
\numrep{\complexityfun(|cs|)})$} \\
$\symb{test}\ \symb{accept} = \strue$ \\
$\symb{test}\ s = \sfalse$ &
  for all $s \in S \setminus \{ \symb{accept} \}$ \\
\vspace{-6pt} \\
\extend{$\symb{rndf}\ cs\ \numrep{n} = \symb{rnf}\ cs\ 
  (\apps{\choice}{\symb{x1}}{\symb{xC}})\ \numrep{n}$} \\
\extend{$\symb{rnf}\ cs\ x\ \numrep{n} =
  \ifte{\:\numrep{n=0}\:}{\:x\:}{\:\symb{cmp}\ x\ (\symb{rndf}\ 
  cs\ \numrep{n-1})\ \numrep{n}}$} \\
\extend{$\symb{cmp}\ x\ H\ \numrep{n}\ \numrep{i} =
  \ifte{\:\numrep{n=i}\:}{\:x\:}{\:F\ i}$} \\
\vspace{-6pt} \\
$\symb{transition}\ \symb{i}\ \symb{r}\ \symb{ch}_k =
  \symb{action}\ \symb{w}\ \symb{d}\ \symb{j}$ &
  for all $(k,i,r,w,d,j) \in T'$ \\
$\symb{transition}\ \symb{i}\ x\ y = \symb{end}\ \symb{i}$ &
  for $i \in \{ \symb{accept},\symb{reject} \}$ \\
$\symb{transition}\ \symb{i}\ \symb{r}\ \symb{ch}_k =
  \symb{end}\ \symb{reject}$ &
  for all $(k,i,r)$ s.t.~$i \notin \{ \symb{accept},\symb{reject} \}$
  and \\
  & there are no $(w,d,j)$ with $(k,i,r,w,d,j) \in T'$ \\
\extend{$\symb{transat}\ cs\ H\ \numrep{n}\ =
  \symb{transition}\ (\symb{state}\ cs\ H\ \numrep{n})\ 
  (\symb{tapesymb}\ cs\ H\ \numrep{n})\ 
  (H\ \numrep{n})$} \\
\vspace{-6pt} \\
$\symb{get1}\ (\symb{action}\ x\ y\ z) = x$ &
$\symb{get1}\ (\symb{end}\ x) = \symb{B}$ \\
$\symb{get2}\ (\symb{action}\ x\ y\ z) = y$ &
$\symb{get2}\ (\symb{end}\ x) = \symb{R}$ \\
$\symb{get3}\ (\symb{action}\ x\ y\ z) = z$ &
$\symb{get3}\ (\symb{end}\ x) = x$ \\
\vspace{-6pt} \\
\extend{$\symb{state}\ cs\ H\ \numrep{n} = \ifte{\:\numrep{n = 0}\:
  }{\:\symb{start}\:}{\:\symb{get3}\ (\symb{transat}\ cs\ H\ 
  \numrep{n-1})}$} \\
\vspace{-6pt} \\
\extend{$\symb{tapesymb}\ cs\ H\ \numrep{n} = \symb{tape}\ cs\ H\ 
\numrep{n}\ (\symb{pos}\ cs\ H\ \numrep{n})$} \\
\vspace{-6pt} \\
$\symb{pos}\ cs\ H\ \numrep{n} = \symb{if}\ \numrep{n=0}\ 
  \symb{then}\ \numrep{0}$ \\
\extend{$\phantom{\symb{pos}\ cs\ H\ \numrep{n} =}~
  \symb{else}\ \symb{adjust}\ cs\ (\symb{pos}\ 
  cs\ H\ \numrep{n-1})\ 
  (\symb{get2}\ (\symb{transat}\ cs\ H\ \numrep{n-1}))$} \\
$\symb{adjust}\ cs\ \numrep{p}\ \symb{L} = \numrep{p-1}$ \\
$\symb{adjust}\ cs\ \numrep{p}\ \symb{R} = \numrep{p+1}$ \\
\vspace{-6pt} \\
\extend{$\symb{tape}\ cs\ H\ \numrep{n}\ \numrep{p} =
  \symb{if}\ \numrep{n = 0}\ \symb{then}\ \symb{inputtape}\ cs\ 
  \numrep{p}$} \\
\extend{$\phantom{\symb{tape}\ cs\ H\ \numrep{n}\ \numrep{p} =}~
  \symb{else}\ \symb{tapehelp}\ cs\ H\ \numrep{n}\ \numrep{p}\ 
  (\symb{pos}\ cs\ H\ \numrep{n-1})$} \\
\extend{$\symb{tapehelp}\ cs\ H\ \numrep{n}\ \numrep{p}\ \numrep{i}
  = \symb{if}\ \numrep{p=i}\ \symb{then}\ \symb{get1}\ 
  (\symb{transat}\ cs\ H\ \numrep{n-1})$} \\
\extend{$\phantom{\symb{tapehelp}\ cs\ H\ \numrep{n}\ \numrep{p}\ 
  \numrep{i} =}~\symb{else}\ \symb{tape}\ cs\ H\ \numrep{n-1}\ 
  \numrep{p}\ \numrep{i}$} \\
\vspace{-6pt} \\
\extend{$\symb{inputtape}\ cs\ \numrep{p} =
  \ifte{\:\numrep{p=0}\:}{\:\symb{B}\:}{\:
  \symb{nth}\ cs\ \numrep{p-1}}$} \\
$\symb{nth}\ \nil\ \numrep{p} = \symb{B}$ \\
\extend{$\symb{nth}\ (x\cons xs)\ \numrep{p} = \ifte{\numrep{p = 0}}{
  \symb{bit}\ x}{\symb{nth}\ xs\ \numrep{p-1}}$} \\
$\symb{bit}\ \strue = \symb{1}$ \\
$\symb{bit}\ \sfalse = \symb{0}$ \\
\end{tabular}
\vspace{-6pt}
\caption{Simulating a Non-deterministic Turing Machine $(A,S,T')$}
\label{fig:machine}
\end{figure}

\bigskip
If $\complexityfun$ is a polynomial, this program has data order $1$
following (\ref{count:base}) of Section~\ref{subsec:counting}.
Thus, non-deterministic cons-free programs with data order $1$ can
accept any decision problem in $\npclass$.
However, since even deterministic such programs can accept all
problems in $\expclass \supseteq \npclass$, this is not surprising.
What \emph{is} noteworthy is how we use the higher-order value: there
is just one functional variable, which, once it has been created, is
passed around but never altered.  This is unlike the values
representing numbers, where we modify a functional value by
taking its ``successor'' or ``predecessor''.
This observation leads to the following definition:

\begin{definition}
A program has \emph{immutable functions} if for all clauses
$\apps{\identifier{f}}{\ell_1}{\ell_k} = s$: (a) the clause uses at
most one variable with a type of order $>0$, and (b) if there is such
a variable, then $s$ contains no other sub-expressions with type order
$>0$.
\end{definition}

Every program where all variables have type order $0$ automatically
has immutable functions.  The program of Figure~\ref{fig:machine} also
has immutable functions, provided the number representations
$\numrep{n}$ all have type order $0$.
%Note that the clauses for
%$\symb{run}$ and $\symb{rndf}$ and $\symb{rnf}$ all have
%sub-expressions of a higher type in their right-hand sides, but this
%is acceptable because they do not use higher-order variables.
We thus conclude:

\begin{lemma}\label{lem:simulmachine}
Every decision problem in $\npclass$ is accepted by a terminating
cons-free program with immutable functions.
\end{lemma}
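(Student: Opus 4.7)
The plan is to take an arbitrary $X \in \npclass$, fix a non-deterministic Turing machine $\mathcal{M}=(A,S,T)$ that accepts $X$ in time $\lambda n.\,a\cdot n^{b}$, and verify that the program $\prog$ obtained by instantiating Figure~\ref{fig:machine} with $\complexityfun(n)=a\cdot n^{b}$ and with the tuple-based number representation from point~(\ref{count:base}) of Section~\ref{subsec:counting} (suitably generalised to more components when $b>2$) satisfies the three requirements: cons-freeness, immutability of functions, and acceptance of $X$, while also terminating on every input.

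First, I would inspect the rules of Figure~\ref{fig:machine} and check the syntactic conditions. For cons-freeness, every constructor-headed right-hand side is either a ground $\Data$ term (the tape alphabet constants, direction constants, state constants, the symbols $\symb{xi}$, and the fixed representation $\numrep{0}$ built from $\nil$) or is a sub-expression of the left-hand side (the patterns $x\cons\nil$ and $x\cons xs$ etc.). For immutable functions, I would scan each clause and confirm that the only variable of type order $>0$ is $H$ (or its alias $x$ while $H$ is being constructed), and that $H$ appears only as a head of a function application or as a direct argument passed on to the recursive call, never alongside another higher-order subterm in the same right-hand side.

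Next, I would prove that $\prog$ accepts $X$. The heart is an invariant, established by strong induction on $n$: for any higher-order value $H$ in the range of $\symb{rndf}\ cs\ \numrep{\complexityfun(|cs|)}$, and any $n\le\complexityfun(|cs|)$, the calls $\symb{state}\ cs\ H\ \numrep{n}$, $\symb{tapesymb}\ cs\ H\ \numrep{n}$ and $\symb{pos}\ cs\ H\ \numrep{n}$ evaluate to the state, head symbol and head position after $n$ steps of the $\mathcal{M}$-run on input $cs$ whose $k$-th non-deterministic choice is the index $i_{k}$ determined by $H\ \numrep{k}\arrr\symb{x}i_{k}$. A dual statement — that every accepting run of $\mathcal{M}$ on $cs$ arises from some $H$ producible by $\symb{rndf}$ — gives the converse direction, so $\prog$ has result value $\strue$ on $cs$ if{f} $cs\in X$. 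Termination is by a simple lexicographic measure on the step counter together with the structural sizes of $cs$ and of the number representations.

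The main obstacle is the $\symb{rndf}/\symb{rnf}/\symb{cmp}$ construction of $H$. Under call-by-value with $\choice$, each evaluation of $H\ \numrep{k}$ triggers its own chain of choices along the $\symb{rnf}$-spine, so one must argue that a single reduction of $\symb{rndf}\ cs\ \numrep{\complexityfun(|cs|)}$ yields a value behaving as a \emph{consistent} total function from $\{0,\dots,\complexityfun(|cs|)\}$ to $\{\symb{x1},\dots,\symb{xC}\}$, and that the set of such values ranges over \emph{all} length-$\complexityfun(|cs|)$ choice sequences. A natural route is to define, for each choice sequence $\vec{c}$, a canonical ``oracle value'' $H_{\vec{c}}$, show by induction on $n$ that some reduction of $\symb{rndf}\ cs\ \numrep{n}$ produces $H_{\vec c\!\restriction_{n}}$, and then feed this oracle into the simulation invariant above — conversely, that every value produced by $\symb{rndf}$ is of the form $H_{\vec{c}}$ for some $\vec{c}$. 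This is the step most sensitive to the precise operational semantics and the exact form of the (apparently mistyped) $\symb{cmp}$ rule, and is the one where the ``work in progress'' caveat in the introduction really bites.
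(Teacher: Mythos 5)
Your proposal follows exactly the paper's route: the paper's own proof of this lemma is just the two-sentence observation that instantiating Figure~\ref{fig:machine} with a machine accepting $X$ in time $a\cdot n^b$ (and the polynomial counting representation of Section~\ref{subsec:counting}) yields the desired program, and your plan spells out the verification obligations---cons-freeness, immutability, the step-indexed simulation invariant, and the consistency of the oracle value built by $\symb{rndf}$---that the paper leaves implicit. One small correction to your final paragraph: under call-by-value the non-deterministic choices are all resolved once, while the value $H$ is being \emph{constructed} as a nested chain of partial applications of $\symb{cmp}$, so a fixed $H$ answers each query $H\ \numrep{k}$ deterministically and the consistency concern you flag is less delicate than you suggest.
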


\begin{proof}
If $X \in \npclass$, then there are $a,b$ and a TM $\M$ such that for
all $n$ and $x \in \{0,1\}^n$:
$x \in X$ if{f} there is an evaluation of $\M$ which accepts $x$ in
at most $a \cdot n^b$ steps.
Using $\M$ to build the program of Figure~\ref{fig:machine},
$\symb{run}\ \overline{x} \arrr \strue$ if{f} $\M$ accepts $x$, if{f}
$x \in X$.
%We are done because this program is cons-free, terminating
%and has immutable functions.
\end{proof}

For terminating cons-free programs with immutable functions to
\emph{characterise} $\npclass$, it remains to be seen that every
decision problem accepted by such a program is in $\npclass$.  That
is, for a fixed program $\prog$ we must design a (non-deterministic)
algorithm operating in polynomial time which returns $\symb{YES}$ for
input data $d$ if{f} $\prog$ has $\strue$ as a result value on input
$d$.

Towards this purpose, we first alter the program slightly:

\begin{lemma}\label{lem:progtransform}
If $\progresult$, then $\progeval{\prog'}{d_1,\dots,d_m} \mapsto b$,
where $\prog'$ is obtained from $\prog$ by replacing all
sub-expressions $\apps{s}{t_1}{t_n}$ in the right-hand side of a
clause where $s$ is not an application by $s \circ t_1 \cdots t_n$.
Here,
\begin{itemize}
\item $(\ifte{s_1}{s_2}{s_3}) \circ t_1 \cdots t_n =
  \ifte{s_1}{(s_2 \circ t_1 \cdots t_n)}{(s_3 \circ t_1 \cdots t_n)}$
\item $(\apps{\choice}{s_1}{s_m}) \circ t_1 \cdots t_n =
  \apps{\choice}{(s_1 \circ t_1 \cdots t_n)}{(s_m \circ t_1 \cdots
  t_n)}$
\item $(\apps{a}{s_1}{s_i}) \circ t_1 \cdots t_n =
  \apps{\apps{a}{s_1}{s_i}}{t_1}{t_n}$ for $a \in \V \cup
  \Constructors \cup \Defineds$.
\end{itemize}
In addition, there is a transformation which preserves and reflects
$\progresult$ such that all argument types of defined symbols and
all clauses have type order $\leq 1$ and there are no clauses
$\apps{\identifier{f}}{\ell_1}{\ell_k} = x$ with $x$ a variable of
functional type.
\end{lemma}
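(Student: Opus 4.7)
The plan is to establish the two parts of the lemma separately. For the first transformation (the $\circ$-rewriting), I would proceed by induction on the call-by-value big-step derivation of $\progresult$. The cases where $s$ is a variable, constructor, or defined symbol are immediate, since by definition $\circ$ leaves these as a plain application. The interesting cases are the two commuting conversions. For $(\ifte{s_1}{s_2}{s_3}) \circ t_1 \cdots t_n$, I would observe that in CBV both $\apps{(\ifte{s_1}{s_2}{s_3})}{t_1}{t_n}$ and $\ifte{s_1}{(s_2 \circ t_1 \cdots t_n)}{(s_3 \circ t_1 \cdots t_n)}$ evaluate $s_1$ exactly once, take the same branch according to its value, and then evaluate the chosen body applied to values derived from $t_1,\dots,t_n$. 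Since the $t_i$ are closed in the same environment as the surrounding expression, their sets of result values do not depend on the branch taken. The case for $\choice$ is analogous: non-determinism simply distributes over the arguments. The induction then lifts this pointwise equivalence to whole-program evaluations.

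For the second transformation, my plan is to compose two sub-transformations. First, to eliminate clauses $\apps{\identifier{f}}{\ell_1}{\ell_k} = x$ where $x$ is a variable of functional type, I would inline $f$: at every call-site $\apps{f}{u_1}{u_k}$ the call is replaced by the sub-expression of the $u_i$'s that would match $x$ in that forwarding clause (combining several clauses via $\choice$ when more than one pattern matches). This preserves semantics because the clause literally returns the value bound to $x$. Second, to lower argument types to order $\leq 1$, I would apply a defunctionalization: for each type of order $\geq 2$ appearing as an argument of a defined symbol, enumerate the (finitely many) syntactic positions in the program at which values of that type can be created, introduce a fresh base-type tag per such position, and replace higher-typed values by tagged records containing the tag together with the data-typed values already in scope at that creation site. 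Each application of a higher-typed value becomes a dispatch on the tag. Performing this top-down from the highest order present down to order $2$ yields a program in which all argument and clause types have order $\leq 1$.

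The main obstacle will be the type-lowering defunctionalization. Several interacting requirements must be handled simultaneously: the newly introduced dispatcher functions must themselves respect the order-$\leq 1$ constraint (forcing one to proceed from highest to lowest order and re-examine the program after each round); the enumeration of creation sites must be provably exhaustive, so that every higher-typed value reachable at runtime is represented by some tag; and the transformation must \emph{preserve and reflect} $\progresult$, demanding a bijective rather than merely simulating correspondence between original higher-typed values and their tagged representatives. The first transformation is convenient scaffolding here: after $\circ$-normalisation every non-trivial application has a non-application head, so creation sites of higher-typed values can be read off syntactically from the right-hand sides, which considerably simplifies the defunctionalization bookkeeping.
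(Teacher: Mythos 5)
Your treatment of the $\circ$-rewriting matches the paper's (the paper dismisses it as trivially correct, and your commuting-conversion induction is the right way to make that precise). The problems are in the second half. First, your elimination of forwarding clauses $\apps{\identifier{f}}{\ell_1}{\ell_k} = x$ by inlining ``combining several clauses via $\choice$ when more than one pattern matches'' does not \emph{reflect} $\progresult$: the evaluation semantics (see step~(1) of Algorithm~\ref{alg:base}) uses \emph{first-match} clause selection, so replacing a call by a nondeterministic choice over all matching clauses adds result values that the original program does not have. Moreover the matching clause generally cannot be determined statically at the call site, since the $\ell_i$ may discriminate on data constructors of arguments whose values are only known at run time. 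The paper's route is different: it observes that, by immutability, \emph{every} clause of such an $\identifier{f}$ must have a bare variable as its right-hand side (a functional-typed right-hand side containing a functional variable can contain no other subexpression of order $>0$), which is what makes these functions eliminable as projections.

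Second, the defunctionalization you propose for type lowering is the wrong tool in this setting. Packaging ``the data-typed values already in scope at the creation site'' into a tagged record is precisely the construction of new data that cons-freeness forbids (and even if encoded with pairs, a closure of order $\geq 2$ must capture values of order $\geq 1$ from its environment, so the record is itself higher-typed and may coexist with another functional variable in the dispatcher's clauses, violating immutability). More fundamentally, the number of distinct closure records arising at run time is exactly the quantity that drives the complexity of these programs, so a transformation that reifies closures as first-class data would change the computational character of the program class being analysed. The paper instead invokes the already-established transformation of~\cite[Lemma 1]{kop:sim:17} (restated in Appendix~\ref{app:prepare}), which lowers all argument and clause types to order $\leq K$ for a cons-free program of data order $K$ \emph{without} introducing new data, exploiting the bound on data order together with immutability; your proposal would need to be replaced by, or reduced to, an argument of that kind.
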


\begin{proof}[Idea]
The $\symb{if}$/$\choice$ change trivially works, the type order
change uses the restriction on data order as detailed
in~\cite[Lemma 1]{kop:sim:17} and functional variables can be avoided
because, by immutability, all clauses for $\identifier{f}$ must have
a variable as their right-hand side.
\end{proof}

We implicitly assume that the transformations of
Lemma~\ref{lem:progtransform} have been done.

To reason on values in the algorithm, we define a semantical way to
describe them.

\begin{definition}
For a fixed cons-free program $\prog$ and input data expressions
$d_1,\dots,d_M$, let $\B$ be the set of data expressions which occur
either as a sub-expression of some $d_i$, or as sub-expression of the
right-hand side of some clause.  For $\asort$ a sort (basic type),
let $\interpret{\asort} := \{ b \in \B \mid b : \asort \}$.
Also let
$\interpret{\atype \times \btype} := \interpret{\atype} \times
  \interpret{\btype}$, and if $\asortorpair$ is not an arrow type,
$\interpret{\atype_1 \arrtype \dots \arrtype \atype_n \arrtype
  \asortorpair} := \{ (e_1,\dots,e_n,o) \mid \forall 1 \leq i \leq n
  [e_i \in \interpret{\atype_i}] \wedge o \in \interpret{\asortorpair}
  \}$
\end{definition}

By relating values of type $\atype$ to elements of
$\interpret{\atype}$, we can prove that Algorithm~\ref{alg:base} below
returns $b$ if and only if $b$ is a result value of $\prog$ for input
$d_1,\dots,d_M$.

\begin{algorithm}\label{alg:base}
Let $\prog$ be a fixed, terminating cons-free program with immutable
functions.

{\bf Input:} data expressions $d_1 : \asortorpair_1,\dots,d_M :
\asortorpair_M$.

For every function symbol $\identifier{f}$ with type $\atype_1
\arrtype \dots \arrtype \atype_m \arrtype \asortorpair$ and arity $k$
(the number of arguments to $\identifier{f}$ in clauses), every
$k \leq i \leq m$ and all $e_1 \in \interpret{\atype_1},\dots,e_i \in
\interpret{\atype_i},o \in \interpret{\atype_{i+1} \arrtype \dots
\arrtype \atype_m \arrtype \asortorpair}$, note down a ``statement''
$\vdash \apps{\identifier{f}}{e_1}{e_i} \mapsto o$.  For all clauses
$\apps{\identifier{f}}{\ell_1}{\ell_k} = s$, also note down
statements $\eta \vdash t \arrr o$ for every sub-expression $t :
\atype$ of $s \circ x_{k+1} \cdots x_i$, $o \in \interpret{\atype}$
and $\eta$ mapping all $y : \btype \in \Var(\apps{\apps{\identifier{f
}}{\ell_1}{\ell_k}}{x_{k+1}}{x_i})$ to some element of
$\interpret{\btype}$.

Treating $\eta$ as a substitution, mark statements $\eta
\vdash t \mapsto o$ confirmed if $t\eta = o$.

Now repeat the following steps, until no further changes are made:
\begin{enumerate}
\item\label{alg:call}
  Mark statements $\vdash \apps{\identifier{f}}{e_1}{e_i} \mapsto
  o$ confirmed if $\apps{\identifier{f}}{\ell_1}{\ell_k} = s$ is the
  first clause that matches $\apps{\identifier{f}}{e_1}{e_k}$ and
  $\eta \vdash s \circ x_{k+1} \cdots x_i \mapsto o$ is marked
  confirmed, where $\eta$ is the ``substitution'' such that
  $(\apps{\apps{\identifier{f}}{\ell_1}{\ell_k}}{x_{k+1}}{x_i})\eta
  = \apps{\identifier{f}}{e_1}{e_i}$.
\item\label{alg:var}
  Mark statements $\eta \vdash \apps{x}{s_1}{s_m} \mapsto o$ with
  $x$ a variable confirmed if there is $(e_1,\dots,\linebreak
  e_m,o) \in
  \eta(x)$ s.t. $\eta \vdash s_i \mapsto e_i$ for all $i$.  (By
  immutability, $\apps{x}{s_1}{s_m}$ has base type.)
\item\label{alg:pair}
  Mark statements $\eta \vdash (s_1,s_2) \mapsto (o_1,o_2)$
  confirmed if both $\eta \vdash s_i \mapsto o_i$ are confirmed.
\item\label{alg:ifte}
  Mark statements $\eta \vdash \ifte{s_1}{s_2}{s_3} \mapsto o$
  confirmed if (a) $\eta \vdash s_1 \mapsto \strue$ and $\eta \vdash
  s_2 \mapsto o$ are both confirmed, or (b) $\eta \vdash s_1 \mapsto
  \sfalse$ and $\eta \vdash s_3 \mapsto o$ are both confirmed.
\item\label{alg:choice}
  Mark statements $\eta \vdash \apps{\choice}{s_1}{s_m} \mapsto o$
  confirmed if some $\eta \vdash s_i \mapsto o$ is confirmed.
\item\label{alg:func}
  Mark statements $\eta \vdash \apps{\identifier{f}}{s_1}{s_n}
  \mapsto o$ confirmed if there are $e_1,\dots,e_n$ with
  $\vdash s_i \mapsto e_i$ confirmed for $1 \leq i \leq n$ and
  either (a) $n \geq \arity(\identifier{f})$ and $\apps{\identifier{
  f}}{e_1}{e_n} \mapsto o$ is confirmed, or (b) $n < \arity(
  \identifier{f})$ and $\apps{\identifier{f}}{e_1}{e_m} \mapsto u$ is
  confirmed for all $(e_{n+1},\dots,e_m,u) \in o$.
\end{enumerate}

\textbf{Output:} return the set of all $b$ such that
$\apps{\symb{f}_1}{d_1}{d_M} \mapsto b$ is confirmed.
\end{algorithm}

This algorithm has exponential complexity since, for $\atype$
of order $1$, the cardinality of $\interpret{\atype}$ is exponential
in the input size (the number of constructors in $d_1,\dots,d_M$).
However, since a program with immutable functions cannot
effectively use values with type order $>1$---so can be transformed
to give all values and clauses type order $1$ or $0$---and the size
of each $e \in \interpret{\atype}$ is polynomial, the following
non-deterministic algorithm runs in polynomial time:

\begin{algorithm}\label{alg:np}
Let $S := \{ \asortorpair \mid \asortorpair$ is a type of order $0$
which is used as argument type of some $\identifier{f} \}$.
Let $T := \max\{\Card(\interpret{\asortorpair}) \mid \asortorpair \in
S \}$, and let $N := \langle$number of function symbols$\rangle \cdot
T^{2 \cdot \langle\text{greatest arity}\rangle \cdot \langle
\text{greatest clause depth}\rangle+1}$.
For every clause $\apps{\identifier{f}}{\ell_1}{\ell_k} = s$ of base
type, and every sub-expression of $s$ which has a higher type $\atype$
and is not a variable, generate $N$ elements of $\interpret{\atype}$.
Let $\Xi := \bigcup_{\asortorpair \in S} \interpret{\asortorpair} \cup
\{$ the functional ``values'' thus generated $\}$.

Now run Algorithm~\ref{alg:base}, but only consider statements with
all $e_i$ and $o$ in $\Xi$.
\end{algorithm}

\begin{proposition}\label{prop:simulprogram}
$\prog$ has result value $b$ if{f} there is an evaluation of
Algorithm~\ref{alg:np} which returns a set containing $b$.
\end{proposition}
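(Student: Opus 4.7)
The plan is to prove the two directions separately. Soundness (the ``if'' direction) asserts that any $b$ returned by some run of Algorithm~\ref{alg:np} is indeed a result value of $\prog$; completeness (the ``only if'' direction) asserts that every result value of $\prog$ arises from a suitable non-deterministic choice of the $N$ guessed functional values.

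For soundness, I would first establish an invariant for Algorithm~\ref{alg:base}, of which Algorithm~\ref{alg:np} is a restriction: for every confirmed statement $\eta \vdash t \mapsto o$, the term $t\eta$ reduces in the operational semantics of $\prog$ to a value that \emph{realises} $o$. Here one makes precise what it means for a value to realise an element of $\interpret{\atype}$: a base-type $o \in \B$ is realised by itself; pairs are realised componentwise; and a tuple $(e_1,\dots,e_n,u) \in \interpret{\atype_1 \arrtype \dots \arrtype \atype_n \arrtype \asortorpair}$ is realised by $v$ if, for every $v_i$ realising $e_i$, the term $\apps{v}{v_1}{v_n}$ reduces to a value realising $u$. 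The invariant is then proved by induction on the round at which a statement becomes confirmed, by case analysis on the clause of the algorithm that marked it. Soundness of Algorithm~\ref{alg:np} follows because every confirmation in a restricted run is also valid in the unrestricted Algorithm~\ref{alg:base}, and because the returned $b$ is base data, hence realised only by itself.

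For completeness I would set up the inverse correspondence. Given an evaluation witnessing $\progresult$, I would walk the evaluation tree and extract, for every intermediate value $v$ of functional type that occurs, a canonical element $\llbracket v \rrbracket$ of $\interpret{\atype}$ realised by $v$; collect these into a set $\Xi_0$. Provided $\Xi_0 \subseteq \Xi$, one shows by induction on the size of the evaluation derivation that every reduction $t\eta \arrr v$ produces a confirmed statement $\eta' \vdash t \mapsto \llbracket v \rrbracket$, with $\eta'$ the pointwise image of $\eta$. In particular, the top-level call $\apps{\symb{f}_1}{d_1}{d_M} \mapsto b$ becomes confirmed, so the corresponding run of Algorithm~\ref{alg:np} returns a set containing $b$. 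The remaining task is to show that $\Xi_0$ has at most $N$ elements of each functional type, so that a non-deterministic guess indeed covers it.

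The main obstacle is this final bounding step. The argument relies crucially on the form of $\prog$ after the transformations of Lemma~\ref{lem:progtransform}: every clause and every argument type has order $\leq 1$, and no clause right-hand side is a bare functional variable. Together with immutability, this means that every functional value occurring during evaluation originates as a functional sub-expression of a clause right-hand side, instantiated by base-type arguments (functional variables are merely passed through, not combined). The number of such sub-expressions is bounded by $\langle$number of function symbols$\rangle \cdot \langle\text{greatest clause depth}\rangle$, each carries at most $\langle\text{greatest arity}\rangle$ free variables of base type, and each such variable ranges over at most $T$ interpretations; bounding also the output component gives the constant $N$ fixed in Algorithm~\ref{alg:np}. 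The care lies in formalising the claim that no reduction step introduces a ``fresh'' functional value not already accounted for by this template, which is precisely the content of the immutability condition and the reason the order restriction of Lemma~\ref{lem:progtransform} is invoked.
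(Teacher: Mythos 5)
Your overall decomposition---soundness of the algorithm via a realisability relation between values and elements of $\interpret{\atype}$, completeness by extracting extensional abstractions from a concrete evaluation tree and running the algorithm restricted to those, and finally a counting argument to justify the bound $N$---is essentially the paper's own plan. The paper factors the first two steps through an intermediate extensional semantics $\vvdash$ (Figure~\ref{fig:extensional}): it proves $\vvdash$ sound for $\vdash$ using a relation $\Downarrow$ that is exactly your ``realises'' (Lemmas~\ref{lem:leadsto} and~\ref{lem:toosoundcore}), proves $\vvdash$ complete using a position-indexed map $\psi(v,\aindex)$ that collects the downstream uses of each functional value (Lemma~\ref{lem:toocomplete}), and then relates the algorithm to $\vvdash$ (Lemmas~\ref{lem:soundness} and~\ref{lem:algbasecomplete:help}, the latter being precisely your ``provided $\Xi_0 \subseteq \Xi$'' step). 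Collapsing the two layers as you do is a legitimate variation; the intermediate semantics mainly buys a cleaner induction.

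The genuine issue is the bounding step, which you correctly identify as the main obstacle but whose sketch has a gap. Counting syntactic origins of functional values (clause positions times $T^{\langle\text{arity}\rangle}$ base-type instantiations) bounds the number of distinct \emph{values} $\apps{\identifier{f}}{b_1}{b_j}$ that can be created, but what must be placed in $\Xi$ are \emph{extensional abstractions}: each guessed element of $\interpret{\atype}$ is (in the algorithm's usage and in the appendix's $\Downarrow$) a \emph{set} of input--output tuples, and which set is needed for a given value is determined not by its creation site but by all of its later applications in the (non-deterministic) derivation --- this is why the paper's $\psi(v,\aindex)$ depends on the tree position and only decreases along $\sqsupseteq$. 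Your phrase ``bounding also the output component'' does not yet address this: one value may require several incomparable abstractions across a derivation unless one first canonicalises. The paper's proof idea handles exactly this point by assuming that each call $\apps{\identifier{f}}{b_1}{b_m} \arrr d$ is derived the same way every time it occurs, so that one abstraction per value suffices; that assumption is itself nontrivial for a non-deterministic program and is not discharged anywhere (the appendix's ``Complexity'' subsection is empty, and the paper states the core results are unproven). So your proposal is a faithful reconstruction of the first two thirds, but the last third needs either the canonicalisation lemma or an argument that the downstream-use sets of a single value can be merged into one element of $\Xi$ without breaking confirmability; as written, your count does not reproduce the exponent $2\cdot\langle\text{greatest arity}\rangle\cdot\langle\text{greatest clause depth}\rangle+1$ in the definition of $N$ and would not survive a value being applied in two branches that demand different graphs.
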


\begin{proof}[Proof Idea]
We can safely assume that if $\apps{\identifier{f}}{b_1}{b_m} \arrr
d$, it is derived in the same way each time it is used.  Therefore,
in any derivation, at most $T^{\langle\text{greatest arity}+1\rangle}$
distinct values are created to be passed around; the formation of each
value may require $\langle$number of function symbols$\rangle \cdot
T^{\langle\text{greatest arity}\cdot \langle\text{greatest clause
depth}\rangle-1}$ additional helper values.
\end{proof}

By Lemma~\ref{lem:simulmachine} and
Proposition~\ref{prop:simulprogram}, we have: terminating cons-free
programs with immutable functions characterise $\npclass$.
This also holds for the limitation to any data order $\geq 1$.

\section{Beyond $\npclass$}

Unlike Jones, we do not obtain a hierarchy of characterisations for
increasing data orders.  However, we \emph{can} obtain a hierarchical
result by extending the definition of \emph{immutable}:

\begin{definition}
A program has \emph{order $n$ immutable functions} if for all clauses
$\apps{\identifier{f}}{\ell_1}{\ell_k} = s$: (a) the clause uses at
most one variable with a type of order $\geq n$, and (b) if there is
such a variable, then $s$ contains no other sub-expressions with type
order $\geq n$.
\end{definition}

\begin{proposition}\label{prop:final}
Terminating cons-free programs with order $K+1$ immutable functions
characterise $\netime{K}$.
\end{proposition}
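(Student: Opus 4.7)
The plan is to generalise both halves of the $\npclass$ characterisation: instantiating $K=0$ in what follows must recover Lemma~\ref{lem:simulmachine} and Proposition~\ref{prop:simulprogram}, and the task is to push each construction up one level in the counting hierarchy of Section~\ref{subsec:counting}.

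For the inclusion $\netime{K} \subseteq \{\text{decision problems accepted by terminating order-}(K{+}1)\text{-immutable cons-free programs}\}$, I would reuse the machine simulation of Figure~\ref{fig:machine}, but instantiate the number representation $\numrep{n}$ using the order-$K$ encoding sketched at the end of Section~\ref{subsec:counting}: numbers in $\{0,\dots,\exp_2^K(an^b)\}$ are represented by values of type order $K$, and the successor, predecessor and zero-test functions on them are cons-free clauses whose variables all have type order $\leq K$. Consequently the history variable $H$ of Figure~\ref{fig:machine} has type $\numtype \arrtype \symb{option}$ of order $K+1$, but each clause in the simulation either mentions $H$ in at most one right-hand-side position (as in the original $K=0$ case) or is an arithmetic clause using no variable of order $\geq K+1$ at all. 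Both conditions of order-$(K+1)$-immutability are therefore satisfied, and the same correctness argument as in Lemma~\ref{lem:simulmachine} yields $\symb{run}\ \overline{x} \arrr \strue$ iff the simulated TM accepts $x$ within $\exp_2^K(an^b)$ steps.

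For the converse, I would generalise Algorithms~\ref{alg:base} and~\ref{alg:np}. The transformation of Lemma~\ref{lem:progtransform} should first be adapted so that all argument and clause types have order $\leq K+1$ (rather than $\leq 1$); because only variables of order $\geq K+1$ need to be ``immutable'', this adaptation does not weaken immutability. For any type $\atype$ of order $j \leq K$, induction on $j$ shows $\Card(\interpret{\atype}) \leq \exp_2^{j}(p(n))$ for some polynomial $p$ depending only on $\prog$, so the deterministic version of the confirmation algorithm runs in $\exp_2^K(p'(n))$ time. Non-determinism is then used exactly as in Algorithm~\ref{alg:np}, but to guess a pool of order-$(K+1)$ functional candidates: by order-$(K+1)$-immutability, any accepting derivation passes around at most $T^{\langle\text{greatest arity}\rangle+1}$ distinct order-$(K+1)$ values, where $T$ now bounds $\Card(\interpret{\asortorpair})$ for all order-$\leq K$ types and hence $T \leq \exp_2^K(p(n))$; each such functional value is assembled from at most $N$ helper subvalues, where $N$ is the order-$K$ analogue of the bound in Algorithm~\ref{alg:np}. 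This fits within $\netime{K}$.

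The main obstacle I expect is verifying the higher-order version of Lemma~\ref{lem:progtransform}: one must show that the order-reduction argument of~\cite[Lemma 1]{kop:sim:17} interacts cleanly with the clause-shape condition defining order-$(K+1)$-immutability, so that after the transformation every clause has a single variable of order $\geq K+1$ on both sides. A secondary, more combinatorial difficulty is the precise bound on the helper-pool size $N$: one must argue that each order-$(K+1)$ candidate can be witnessed by iterating every subexpression of a clause over all order-$\leq K$ argument tuples, and check that the resulting $T^{2 \cdot \langle\text{arity}\rangle \cdot \langle\text{depth}\rangle + 1}$-style expression still lies within $\exp_2^K$, rather than pushing us one level higher in the hierarchy.
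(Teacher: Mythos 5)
Your proposal follows exactly the route the paper takes: its proof of Proposition~\ref{prop:final} is only the one-line remark that the result is ``an easy adaptation from the proofs of Lemma~\ref{lem:simulmachine} and Proposition~\ref{prop:simulprogram}'', which is precisely the two-sided generalisation (order-$K$ counting in the machine simulation of Figure~\ref{fig:machine}, and $\exp_2^K$-bounded extensional values in Algorithms~\ref{alg:base} and~\ref{alg:np}) that you spell out. Your identification of the higher-order analogue of Lemma~\ref{lem:progtransform} and the helper-pool bound $N$ as the genuine obstacles is consistent with the paper's own admission that these proofs are not yet complete.
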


\begin{proof}[Proof Idea]
An easy adaptation from the proofs of
Lema~\ref{lem:simulmachine} and Proposition~\ref{prop:simulprogram}.
\end{proof}

\section{Conclusion and discussion}

If Propositions~\ref{prop:simulprogram} and~\ref{prop:final} hold, we
have obtained a characterisation of the hierarchy
$
\npclass \subsetneq \neclass \subsetneq \netime{2} \subsetneq \cdots
\subsetneq \netime{K} \subsetneq \cdots
$ in primarily syntactic terms.
%This is an important step beyond the result in~\cite{kop:sim:17},
%where we did not use immutability and consequently obtained
%$\pclass \subsetneq \elementary \subsetneq \elementary \subsetneq
%\cdots$ or---constraining clauses to disallow partial applications of
%functional variables---the original hierarchy
%$\pclass \subsetneq \etime{1} \subsetneq \etime{2} \subsetneq
%\cdots$.

Arguably, this is a rather inelegant characterisation, both because of
the termination requirement and because the definition of immutability
itself is somewhat arcane; it is not a direct translation of the
intuition that functional values, once created, may not be altered.

The difficulty is that non-determinism is very powerful, and easily
raises expressivity too far when not contained.  This is evidenced
in~\cite{kop:sim:17} where adding non-determinism to cons-free
programs of data order $K\geq 1$ raises the characterised class from
$\etime{K}$ to $\elementary$.
(In~\cite{kop:sim:17}, we did not use immutability; alternatively
restricting the clauses to disallow partial application of functional
variable resulted in the original hierarchy
$\pclass \subsetneq \etime{1} \subsetneq \etime{2} \subsetneq \cdots$.)
In our setting, we must be careful that the allowances made to
\emph{build} the initial function cannot be exploited to manipulate
exponentially many distinct values.
For example, if we drop the termination requirement, we could identify
the lowest number $i < 2^n$ such that $P(i)$ holds for any
polytime-decidable property $P$, as follows:

\begin{tabular}{ll}
$\symb{bit\_of\_lowest}\ \numrep{n}\ \numrep{j} =
  \symb{f}\ \numrep{n}\ \numrep{j}$ &
$\symb{nul}\ \numrep{j} = \sfalse$ \\
$\symb{f}\ \numrep{n} = \choice\ \symb{nul}\ 
  (\symb{succtest}\ (\symb{f}\ \numrep{n}))$ &
$\symb{succ}\ F\ \numrep{n}\ \numrep{j} = \dots$ \\
\extend{$\symb{succtest}\ F\ \numrep{j} = \ifte{\:\symb{prop}\ F\:}{\:
  F\ \numrep{j}\:}{\:\symb{succ}\ F\ \numrep{n}\ \numrep{j}}$} \\
\end{tabular}

\noindent
Here, the clauses for $\symb{succ}\ F\ \numrep{n}\ \numrep{j}$ result
in $\strue$ if $b_j = 1$ when representing the successor of $F$ as a
bitvector $b_1\dots b_n$, and in $\sfalse$ otherwise.  It is unlikely
that the corresponding decision problem is in $\npclass$.  Similar
problems may arise if we allow multiple higher-order variables,
although we do not yet have an example illustrating this problem.

In the future, we intend to complete the proofs, and study these
restrictions further.  Even if this does not lead to an elegant
characterisation of the $\netime{K}$ hierarchy, it is likely to give
further insights in the power of non-determinism in higher-order
cons-free programs.
%, and we might for instance obtain new
%characterisations of the space complexity classes.

\bibliography{references}
\bibliographystyle{plainurl}

\end{document}

\pagebreak
\appendix

\section{Preparations}\label{app:prepare}

\subsection{Translating cons-free programs}

From~\cite[Appendix A]{kop:sim:17} we know that we can always
transform a cons-free program with data order $K$ to satisfy the
following properties:
\begin{itemize}
\item all defined symbols in $\prog$ have a type $\atype_1 \arrtype
  \dots \arrtype \atype_m \arrtype \asortorpair$ such that both 
  $\typeorder{\atype_i} \leq K$ for all $i$ and
  $\typeorder{\asortorpair} \leq K$;
\item in all clauses, all sub-expressions of the right-hand side have
  a type of order $\leq K$ as well;
\item in all clauses, $\ifte{s_1}{s_2}{s_3}$ and $\apps{\choice}{s_1
  }{s_m}$ do not occur at the head of an application.
\end{itemize}
This transformation does not affect any conclusions
$\progresult$.

\bigskip
In this paper, we silently assume that this transformation step has
been done.

\section{An alternative view of programs with data order $1$}\label{app:alterview}

We consider programs of data order $1$ such that whenever the
right-hand side of a rule contains a sub-expression $\apps{x}{s_1}{
s_n}$ with $n > 0$, then its type is not an arrow type, and if a
clause $\apps{\identifier{f}}{\ell_1}{\ell_k} = s$ has arrow type
then none of variables contained in any $\ell_i$ has arrow type.
Other than these requirements, we do not consider function
immutability.

\subsection{An alternative semantics} We will replace derivations over
$\vdash$ by derivations over $\vvdash$:

\begin{figure}[!htb]
\begin{prooftree}
\AxiomC{}
\LeftLabel{[Constructor]\quad}
\UnaryInfC{$\prog,\eta \vvdash \apps{\identifier{c}}{s_1}{s_m}
\too \apps{\identifier{c}}{(s_1\eta)}{(s_m\eta)}$}
\end{prooftree}

\begin{prooftree}
\AxiomC{$\prog,\eta \vvdash s \too o_1$}
\AxiomC{$\prog,\eta \vvdash t \too o_2$}
\LeftLabel{[Pair]\quad}
\BinaryInfC{$\prog,\eta \vvdash (s,t) \too (o_1,o_2)$}
\end{prooftree}

\begin{prooftree}
\AxiomC{$\prog,\eta \vvdash s_i \too o$}
\LeftLabel{[Choice]\quad}
\RightLabel{for $1 \leq i \leq n$}
\UnaryInfC{$\prog,\eta \vvdash \apps{\choice}{s_1}{s_n} \too o$}
\end{prooftree}

\begin{prooftree}
\AxiomC{$\prog,\eta \vdash s_1 \too \strue$}
\AxiomC{$\prog,\eta \vvdash s_2 \too o$}
\LeftLabel{[Cond-True]}
\BinaryInfC{$\prog,\eta \vvdash \ifte{s_1}{s_2}{s_3} \too o$}
\end{prooftree}

\begin{prooftree}
\AxiomC{$\prog,\eta \vdash s_1 \too \sfalse$}
\AxiomC{$\prog,\eta \vvdash s_3 \too o$}
\LeftLabel{[Cond-False]}
\BinaryInfC{$\prog,\eta \vvdash \ifte{s_1}{s_2}{s_3} \too o$}
\end{prooftree}

\begin{prooftree}
\AxiomC{$\prog,\eta \vvdash s_i \too e_i$ for $1 \leq i \leq n$}
\LeftLabel{[Variable]\quad}
\RightLabel{\begin{tabular}{l}
either $n=0$ and $o = \eta(x)$ \\
or $n > 0$ and $(e_1,\dots,e_n,o) \in \eta(x)$ \\
\end{tabular}}
\UnaryInfC{$\prog,\eta \vvdash \apps{x}{s_1}{s_n} \too o$}
\end{prooftree}

\begin{prooftree}
\AxiomC{$\prog,\eta \vvdash s_i \too e_i$ for $1 \leq i \leq n$}
\AxiomC{$\prog \vvdashcall \apps{\identifier{f}}{e_1}{e_n} \too o$}
\RightLabel{for $\identifier{f} \in \Defineds$}
\LeftLabel{[Func]\quad}
\BinaryInfC{$\prog,\eta \vvdash \apps{\identifier{f}}{s_1}{s_n} \too o$}
\end{prooftree}

\begin{prooftree}
\AxiomC{$\prog \vvdashcall \apps{\identifier{f}}{e_1}{e_m} \too u$ for
all $(e_{n+1},\dots,e_m,u) \in O$}
\RightLabel{if $n < \arity(\identifier{f})$}
\LeftLabel{[Value]\quad}
\UnaryInfC{$\prog \vvdashcall \apps{\identifier{f}}{e_1}{e_n} \too
O_\atype$}
\end{prooftree}

\vspace{4pt}

\begin{prooftree}
\AxiomC{$\prog,\eta \vvdash s \circ x_{k+1} \cdots x_n \too o$}
%\RightLabel{if $\apps{\identifier{f}}{\ell_1}{\ell_k} = s \in \prog$ and
%  $\domain(\eta) = \Var(\apps{\identifier{f}}{\ell_1}{\ell_k})$
%  and each $v_i = \ell_i\eta$}
\RightLabel{\begin{tabular}{l}
if $\apps{\identifier{f}}{\ell_1}{\ell_k} = s$ is the
first clause in $\prog$ which matches \\
$\apps{\identifier{f}}{e_1}{e_k}$, and $n \geq k$ and $\xi$ is the
match ext-environ-\\
ment and $\eta = \xi \cup
[x_{k+1}:=e_{k+1},\dots,x_n:=e_n]$
\end{tabular}}
\LeftLabel{[Call]\quad}
\UnaryInfC{$\prog\vvdashcall \apps{\identifier{f}}{e_1}{e_n} \too o$}
\end{prooftree}

\caption{Alternative semantics using extensional values}
\vspace{-12pt}
\label{fig:extensional}
\end{figure}

\subsection{Soundness of $\vvdash$ w.r.t.\ $\vdash$}\label{subsec:toosound}

We prove that if $\prog \vvdashcall \apps{\identifier{f}_1}{d_1}{d_M}
\too b$ then we have $\progresult$.  We do so for cons-free programs
of any data order (but with fully applied variables).

To start, since $\vvdash$ treats function applications differently
from the way $\vdash$ does, it will be useful to define a relation
which takes this into account:

\begin{definition}
Let $\identifier{f} : \atype_1 \arrtype \dots \atype_n \arrtype
\btype \in \F$ and let values $v_1 : \atype_1,\dots,v_n : \atype_n$ be
given.  We say $\prog \vdashcall \apps{\identifier{f}}{v_1}{v_n}
\leadsto w$ if either $n \leq \arity(\identifier{f})$ and $\prog
\vdashcall \apps{\identifier{f}}{v_1}{v_n} \arrr w$, or $n >
\arity(\identifier{f}) =: k$ and there are $w_k,\dots,w_n$ such that
$w_{n+1} = w$ and $\prog \vdashcall \apps{\identifier{f}}{v_1}{v_k}
\arrr w_k$ and for $k \leq i \leq n$ we have: $\prog \vdashcall w_i\ 
v_{i+1} \arrr w_{i+1}$.
\end{definition}

We easily see that $\leadsto$ bears relevance to $\arrr$:

\begin{lemma}\label{lem:leadsto}
If $\prog \vdashcall \apps{\identifier{f}}{v_1}{v_n} \leadsto w$, then
for all environments $\gamma$ and expressions $s_1,\dots,s_n$ such
that $\prog,\gamma \vdash s_i \arrr v_i$ for all $i$ we have:
$\prog,\gamma \vdash \apps{\identifier{f}}{s_1}{s_n} \arrr w$.
Moreover, if $\prog,\gamma \vdash t \arrr \apps{\identifier{f}}{v_1}{
v_i}$ for $i < \arity(\identifier{f})$ then $\prog,\gamma \vdash
\apps{t}{s_{i+1}}{s_n} \arrr w$ as well.
\end{lemma}

\begin{proof}
First suppose $\arity(\identifier{f}) = 0$.  Then $\prog \vdashcall
\apps{\identifier{f}}{v_1}{v_n} \leadsto w$ implies that there are
$w_0,\dots,w_n = w$ such that $\prog \vdashcall \identifier{f} \arrr
w_0$ and $\prog \vdashcall w_{i-1}\ v_i \arrr w_i$ for $1 \leq i \leq
n$.  By [Function], $\prog,\gamma \vdash \identifier{f} \arrr w_0$.
By $n$ applications of [Appl], $\prog,\gamma \vdash \apps{
\identifier{f}}{s_1}{s_n} \arrr w$ as required.

Otherwise, $\prog,\gamma \vdash \identifier{f} \arrr \identifier{f}$
by [Function] and [Closure], so we only need to prove the second part,
which we do by induction on $n-i$.  We are done if $i = n$ (as then
necessarily $w = \apps{\identifier{f}}{v_1}{v_i}$).  Otherwise, if
$i+1 < \arity(\identifier{f})$ then $w_1$ can only be
$\apps{\identifier{f}}{v_1}{v_{i+1}}$ and $\prog,\gamma \vdash
t\ s_{i+1} \arrr w_1$ by [Appl] and [Closure]; we complete with the
induction hypothesis.
Finally, if $i+1 = \arity(\identifier{f})$ then $\prog,\gamma \vdash
\apps{\identifier{f}}{s_1}{s_{i+1}} \arrr w_{i+1}$ by [Appl] (all
premises are given), and $n-i-1$ additional uses of [Appl] give
$\apps{\identifier{f}}{s_1}{s_n} \arrr w_n =w$ as required.
\end{proof}

Having this, we define a relation $\Downarrow$ between values and
extensional values:

\begin{definition}
For a value $v : \atype$ and an extensional value $e \in \interpret{
\atype}$, we have $\down{v}{e}$ if this can be derived using the
following clauses:
\begin{itemize}
\item $\down{b}{b}$ for $b \in \Data$
\item $\down{(v,w)}{(e,o)}$ if $\down{v}{e}$ and $\down{w}{o}$
\item $\down{v}{A_{\atype_1 \arrtype \dots \arrtype \atype_m \arrtype
  \asortorpair}}$ if $\asortorpair$ is not an arrow type, $m > 0$ and
  $A \subseteq \varphi(v) = \{ (e_1,\dots,e_m,o) \mid o \in
  \interpret{\asortorpair} \wedge$ each $e_i \in \interpret{\atype_i}
  \wedge$ for all $w_1,\dots,w_m$ such that $\down{w_i}{e_i}$ for each
  $i$ there exists $w$ such that $\prog \vdashcall \apps{v}{w_1}{w_m}
  \leadsto w$ and $\down{w}{o} \}$.
\end{itemize}
\end{definition}

The core of the soundness argument is the following lemma:

\begin{lemma}\label{lem:toosoundcore}
Let:
\begin{itemize}
\item $\identifier{f} : \atype_1 \arrtype \dots \arrtype \atype_m
  \arrtype \asortorpair \in \F$ be a defined symbol, and
  $1 \leq n \leq m$;
\item $v_i : \atype_i$ be a value and $e_i \in \interpret{\atype_i}$
  with $\down{v_i}{e_i}$ for $1 \leq i \leq n$;
\item $o \in \interpret{\atype_{n+1} \arrtype \dots \arrtype \atype_m
  \arrtype \asortorpair}$
\end{itemize}
If $\prog \vvdashcall \apps{\identifier{f}}{e_1}{e_n} \too o$, then
there exists $w$ with $\down{w}{o}$ such that
$\prog \vdashcall \apps{\identifier{f}}{v_1}{v_n} \leadsto w$.

Let:
\begin{itemize}
\item $t$ be a sub-expression of $\apps{s}{x_{k+1}}{x_n}$ for a clause
  $\apps{\identifier{f}}{\ell_1}{\ell_k} = s$, of type $\btype$;
\item $\eta$ be a set-environment on domain $\Var(\identifier{f}\ 
  \vec{\ell}) \cup \{x_{k+1},\dots,x_n\}$;
\item $\gamma$ be an environment on the same domain, such that
  $\down{\gamma(x)}{\eta(x)}$ for each $x$;
\item $o \in \interpret{\atype_{n+1} \arrtype \dots \arrtype \atype_m
  \arrtype \asortorpair}$
\end{itemize}
If $\prog,\eta \vvdash t \too o$, then $\prog,\gamma \vdash t \arrr w$
for some $w$ with $\down{w}{o}$.
\end{lemma}

\begin{proof}
Both statements are proved together by induction on the form of the
derivation.

First suppose $\prog \vvdashcall \apps{\identifier{f}}{e_1}{e_n}
\leadsto o$ under the first assumptions.  There are three cases:
\begin{itemize}
\item $n < \arity(\identifier{f})$, so the statement follows by
  [Value]; we can write $o = O_{\atype_{n+1} \arrtype \dots \arrtype
  \atype_m \arrtype \asortorpair}$ and for all $(e_{n+1},\dots,e_m,u)
  \in O$ there is a subtree with root $\prog \vvdashcall
  \apps{\identifier{f}}{e_1}{e_m} \too u$.
  By the induction hypothesis, this means that for all $v_{n+1},\dots,
  v_m$ such that $\down{v_i}{e_i}$ for each $i$, there exists
  $v$ with $\down{v}{u}$ such that $\prog \vdashcall
  \apps{\identifier{f}}{v_1}{v_m} \arrr v$; by definition, $(e_{n+1},
  \dots,e_m,u) \in \varphi(\apps{\identifier{f}}{v_1}{v_n})$.
  Thus, $O \subseteq \varphi(w)$ with $w = \apps{\identifier{f}}{v_1}{
  v_n}$, and indeed $\prog \vdashcall \apps{\identifier{f}}{v_1}{v_n}
  \arrr w$ by [Closure], so also $\prog \vdashcall \apps{\identifier{
  f}}{v_1}{v_n} \leadsto w$.
\item $n = \arity(\identifier{f})$, so the statement follows by
  [Call]; the first clause in $\prog$ which matches
  $\apps{\identifier{f}}{e_1}{e_n}$ is $\apps{\identifier{f}}{\ell_1}{
  \ell_n} = s$ and, for $\eta$ the matching set-environment, we have
  $\prog,\eta \vvdash s \too o$.
  Following~\cite[Lemma E.12]{kop:sim:17}, we obtain an environment
  $\gamma$ on the same domain as $\eta$ such that $\down{\gamma(x)}{
  \eta(x)}$ for each $x$ and each $v_i = \ell_i\gamma$.  By the
  induction hypothesis (second part), we obtain $w$ such that
  $\prog,\gamma \vdash s \arrr w$ and $\down{w}{o}$.  As also $\prog
  \vdashcall \apps{\identifier{f}}{v_1}{v_n} \arrr w$ by [Call], we
  obtain $\prog \vdashcall \apps{\identifier{f}}{v_1}{v_n} \leadsto
  w$.
\item $n > \arity(\identifier{f}) =: k$, so the statement follows by
  [Call]; the first clause in $\prog$ which matches
  $\apps{\identifier{f}}{e_1}{e_k}$ is $\apps{\identifier{f}}{\ell_1}{
  \ell_k} = s$ and, for $\eta$ the matching set-environment extended
  with $[x_{k+1}:=e_{k+1},\dots,x_n:=e_n]$, we have
  $\prog,\eta \vvdash \apps{s}{x_{k+1}}{x_n} \too o$.
  As above, we obtain $w$ such that
  $\prog,\gamma \vdash \apps{s}{x_{k+1}}{x_n} \arrr w$ and $\down{w}{
  o}$.

  We can write $\gamma = \gamma' \cup [x_{k+1}:=v_{k+1},\dots,x_n:=
  v_n]$ for $\gamma'$ the limitation of $\gamma$ to
  $\Var(\apps{\identifier{f}}{\ell_1}{\ell_k})$.  By the shape of
  derivations for $\vdash$ the conclusion $\prog,\gamma \vdash
  \apps{s}{x_{k+1}}{x_n} \arrr w$ can only be obtained through $n-k$
  applications of [Appl] and [Instance]: there are $w_k,\dots,w_n$
  such that $\prog,\gamma \vdash s \arrr w_k$ and $\prog \vdashcall
  w_{i-1}\ v_i \arrr w_i$ for $k < i \leq n$ and $w_n = w$.  Thus,
  $\prog \vdashcall \apps{\identifier{f}}{v_1}{v_n} \leadsto w$.
\end{itemize}

Now suppose $\prog,\eta \vvdash t \too o$ under the second
assumptions.  There are six cases:
\begin{description}
\item[Constructor] $t = \apps{\identifier{c}}{t_1}{t_m}$ and
  $o = \apps{\identifier{c}}{(t_1\eta)}{(t_m\eta)}$; as $o$ is an
  extensional value, $o \in \B$, so each $t_i\eta \in \B$; this
  implies that all relevant $\eta(x) \in \B$, so $\gamma(x) = \eta(x)$
  and $t_i\eta = t_i\gamma \in \Data$.  A trivial induction on the
  form of each $t_i$ shows that $\prog,\gamma \vdash t_i \arrr t_i
  \gamma$ (using only [Constructor] and [Pair]).  Thus, by an
  additional [Constructor] step, $\prog,\gamma \vdash t \arrr w := o$
  and clearly $\down{w}{o}$.
\item[Pair] $t = (t_1,t_2),\ o = (o_1,o_2)$ and by the induction
  hypothesis $\prog,\gamma t_i \arrr o_i$ for each $i$; we complete
  with [Pair].
\item[Choice] Similarly immediate by the induction hypothesis.
\item[Cond-True or Cond-False] Since $\down{\strue}{u}$ implies $u =
  \strue$ and $\down{\sfalse}{u}$ implies $u = \sfalse$, this is
  similarly simple, using [Conditional] with [If-True] or [If-False].
  respectively.
\item[Variable] If $n=0$, the conclusion is immediate by [Instance].
  Otherwise, the induction hypothesis provides $v_1,\dots,v_n$ such
  that for each $i$ both $\prog,\gamma \vdash s_i \arrr v_i$ and
  $\down{v_i}{e_i}$.  Since $\down{\gamma(x)}{\eta(x)}$, the property
  $(e_1,\dots,e_n,o) \in \eta(x)$ implies the existence of $w$ with
  $\down{w}{o}$ such that $\prog \vdashcall \apps{\gamma(x)}{v_1}{v_n}
  \leadsto w$.  As $\prog,\gamma \vdash x \arrr \gamma(x)$,
  Lemma~\ref{lem:leadsto} provides $\prog,\gamma \vdash \apps{x}{s_1}{
  s_n} \arrr w$ as well.
\item[Func] $t = \apps{\identifier{f}}{s_1}{s_n}$, there are $e_1,
  \dots,e_n$ such that $\prog,\eta \vvdash s_i \too e_i$ for all $i$,
  and $\prog \vvdashcall \apps{\identifier{f}}{e_1}{e_n} \too o$.  By
  the induction hypothesis (second part), we find $v_1,\dots,v_n$ with
  $\down{v_i}{e_i}$ and $\prog,\gamma \vdash s_i \arrr v_i$ for all
  $i$.  By the induction hypothesis (first part), we obtain $w$ such
  that $\down{w}{o}$ and $\prog \vdashcall \apps{\identifier{f}}{v_1}{
  v_n} \leadsto w$.  By Lemma~\ref{lem:leadsto}, this combines to give
  $\prog,\gamma \vdash \apps{\identifier{f}}{s_1}{s_n} \arrr w$ as
  desired.
\end{description}
\end{proof}

Soundness is now easily obtained:

\begin{lemma}\label{lem:toosound}
If $\prog \vvdashcall \apps{\identifier{f}_1}{d_1}{d_M} \too b$, then
$\progresult$.
\end{lemma}

\begin{proof}
By Lemma~\ref{lem:toosoundcore}, $\prog \vvdashcall
\apps{\identifier{f}_1}{d_1}{d_M} \too b$ implies $\prog \vdashcall
\apps{\identifier{f}_1}{d_1}{d_M} \leadsto b$, which by
Lemma~\ref{lem:leadsto} implies that $\prog,[x_1:=d_1,\dots,x_M:=d_M]
\vdash \apps{\identifier{f}_1}{x_1}{x_M} \arrr b$.  This immediately
gives $\progresult$.
\end{proof}

\subsection{Completeness of $\vvdash$ w.r.t.\ $\vdash$}

Next, we see that if $\progresult$ then $\prog \vdashcall \apps{
\identifier{f}_1}{d_1}{d_M} \too b$.  Key to this is a labeling of all
nodes in the derivation tree; we
follow~\cite[Definition 23 (Appendix E.3)]{kop:sim:17} and label the
root $\prog \vdashcall \apps{\identifier{f}_1}{d_1}{d_M} \arrr b$ with
$0$.

To handle the relationship between $\vdash$ (where function calls
$\apps{\identifier{f}}{v_1}{v_n} \arrr w$ are only done if $n \leq
\arity(\identifier{f})$) and $\vvdash$ (where they are also done if
$n > \arity(\identifier{f})$), we introduce a position-based relation
similar to $\leadsto$ in Section~\ref{subsec:toosound}.

\begin{definition}
Given a derivation tree, the subtree at position $\aindex$ is a
\emph{$j$-derivation for $\apps{\identifier{f}}{v_1}{v_n} \leadsto w$}
if either (a) $j = 0$ and the root has the form $\prog,
\gamma \vdash \apps{\identifier{f}}{s_1}{s_n} \arrr w$ and for $1 \leq
i \leq n$ the subtree at position $\aindex \cdot 1^{n-i} \cdot 2$ has
a root $\prog,\gamma \vdash s_i \arrr v_i$, or (b) the root has the
form $\prog,\gamma \vdash \apps{x}{s_{j+1}}{s_n} \arrr w$ with $0 \leq
j < n$ and $\gamma(x) = \apps{\identifier{f}}{v_1}{v_j}$ and for $j <
i \leq n$ the subtree at position $\aindex \cdot 1^{n-i} \cdot 2$ has
a root $\prog,\gamma \vdash s_i \arrr v_i$.
\end{definition}

From this, we define the function $\psi$ mapping values and labels to
extensional values.

\begin{definition}
Assume given a derivation tree for $\prog \vdashcall
\apps{\identifier{f}_1}{d_1}{d_M} \arrr b$.  Let:
\begin{itemize}
\item $\psi(v,\aindex) = v$ if $v \in \B$
\item $\psi((v_1,v_2),\aindex) = (\psi(v_1,\aindex),\psi(v_2,\aindex))$
\item $\psi(\apps{\identifier{f}}{v_1}{v_n},\aindex) = \{ (e_{n+1},
  \dots,e_m,u) \mid \exists \cindex \succ \bindex \geq \aindex [$there
  are $v_{n+1},\dots,v_m,w$ such that the subtree at position
  $\bindex$ is an $n$-derivation for $\apps{\identifier{f}}{v_1}{v_m}
  \arrr w$ and $u = \psi(w,\cindex)$ and $e_i \sqsupseteq \psi(v_i,
  \bindex \cdot 1^{m-i} \cdot 3)$ for $n < i \leq m]\}_{\atype_{n+1}
  \arrtype \dots \arrtype \atype_m \arrtype \asortorpair}$
\end{itemize}
Here, $\sqsupseteq$ is the smallest relation satisfying the following
properties:
\begin{itemize}
\item for $e \in \Data$: $e \sqsupseteq u$ if $e = u$;
\item $(e_1,e_2) \sqsupseteq (u_1,u_2)$ if $e_1 \sqsupseteq u_1$ and
  $e_2 \sqsupseteq u_2$;
\item $E_\atype \sqsupseteq U_\atype$ if for all $(e_1,\dots,e_n,u)
  \in U$ there exists $o \sqsupseteq u$ such that $(e_1,\dots,e_n,o)
  \in E$.
\end{itemize}
\end{definition}

By induction on types it is easily derived that $\sqsupseteq$ is both
reflexive and transitive; from reflexivity, we obtain that $A_\atype
\sqsupseteq B_\atype$ holds whenever $A \supseteq B$.
This latter property ensures that always $\psi(v,\aindex) \sqsupseteq
\psi(v,\bindex)$ if $\aindex \leq \bindex$; we will refer to this as
the \emph{$\psi$-decrease property}.

We use $\psi$ to obtain our completeness result:

\begin{lemma}\label{lem:toocomplete}
If $\progresult$ then $\prog \vvdashcall \apps{\identifier{f}_1}{d_1}{
d_M} \too b$.
\end{lemma}

\begin{proof}
If $\progresult$, then $\prog,[x_1:=d_1,\dots,x_M:=d_M] \vdash
\apps{\identifier{f}_1}{x_1}{x_M} \arrr b$.  Thus, it suffices if the
following three statements hold for all positions $\aindex$ in the
tree and $\bindex \succ \aindex$:
\begin{enumerate}
\item\label{lem:toocomplete:truecall}
  If the subtree at position $\aindex$ has root
  $\prog \vdashcall \apps{\identifier{f}}{v_1}{v_n} \arrr w$ and
  there exist $e_1,\dots,e_n$ such that each $e_i \sqsupseteq
  \psi(v_i,\aindex)$, then there exists $o \sqsupseteq \psi(w,
  \bindex)$ such that $\prog \vvdashcall \apps{\identifier{f}}{e_1}{
  e_n} \too o$.
\item\label{lem:toocomplete:fakecall}
  If the subtree at position $\aindex$ is a $j$-derivation for
  $\apps{\identifier{f}}{v_1}{v_n} \leadsto w$, and there exist
  $e_1,\dots,e_n$ such that $e_i \sqsupseteq \psi(v_i,\aindex)$ for
  $1 \leq i \leq j$ and $e_i \sqsupseteq \psi(v_i,\aindex \cdot 1^{n-
  i} \cdot 3)$ for $j < i \leq n$, then there exists $o \sqsupseteq
  \psi(w,\bindex)$ such that $\prog \vvdashcall \apps{\identifier{f}}{
  e_1}{e_n} \too o$.
\item\label{lem:toocomplete:term}
  If the subtree at position $\aindex$ has root $\prog,\gamma \vdash
  s \arrr w$ and we have a set-environment $\eta$ on the same domain
  as $\gamma$ such that $\eta(x) \sqsupseteq \psi(\gamma(x),\aindex)$
  for all $x$, then there exists $o \sqsupseteq \psi(w,\bindex)$ such
  that $\prog,\eta \vvdash s \too o$.
\item\label{lem:toocomplete:termapply}
  If the subtree at position $\aindex$ has root $\prog,\gamma \vdash
  s \arrr w$ with $s$ not containing any variables of functional type
  and we are given $\aindex_n \succ \dots \succ \aindex_0 = \aindex$
  and $w_n,\dots,w_0 = w$ such that $n > 0$ and $\bindex \succ
  \aindex_n$ and the subtree at $\aindex_i$ for $i > 0$ is $\prog
  \vdashcall w_{i-1}\ v_i \arrr w_i$, then for all
  set-environments $\eta$ on the same domain as $\gamma$ such that
  $\eta(x) \sqsupseteq \psi(\gamma(x),\aindex)$ for all $x$, and $e_1,
  \dots,e_n$ such that each $e_i \sqsupseteq \psi(v_i,\aindex_i)$,
  there exists  $o \sqsupseteq \psi(w_n,\bindex)$ such that
  $\prog,\eta \cup [x_1:=e_1,\dots,x_n:=e_n] \vvdash s \circ x_1
  \cdots x_n \too o$.
\end{enumerate}
Note that $\bindex$ is not required to be a position in the tree, so
may in particular be $1 \succ 0$.  Taking $\aindex = 0$ and $\eta :=
[x_1:=d_1,\dots,x_M:=d_M]$, we thus obtain $\prog,\eta \vvdash
\apps{\identifier{f}_1}{x_1}{x_M} \too b$ from
(\ref{lem:toocomplete:term}), which---since [Func] is the only
applicable derivation rule---implies $\prog \vvdashcall
\apps{\identifier{f}_1}{d_1}{d_M} \too b$ as required.

To prove these four properties, we use induction on $\aindex$ ordered
with $>$ in reverse order, which is well-founded because there are
only finitely many positions in the tree (in essence, we are
traversing the tree from right to left, top to bottom).

To start (for (\ref{lem:toocomplete:truecall})), suppose the subtree
labeled $\aindex$ has a root $\prog \vdashcall
\apps{\identifier{f}}{v_1}{v_n} \arrr w$ and $e_1 \sqsupseteq
\psi(v_1,\aindex),\dots,e_n \sqsupseteq \psi(v_n,\aindex)$ and
$\bindex \succ \aindex$.  There are two sub-cases:
\begin{itemize}
\item Suppose $n = \arity(\identifier{f})$.  Then
  the conclusion can only be obtained by [Call]: letting
  $\apps{\identifier{f}}{\ell_1}{\ell_n} = s$ be the first matching
  clause, and $\gamma$ such that each $\ell_i\gamma = v_i$, the
  subtree at position $\aindex \cdot 1$ has root $\prog,\gamma \vdash
  s \arrr w$.
  By \cite[Lemma E.12]{kop:sim:17} we obtain a set-environment
  $\eta$ on the same domain such that $\eta(x) \sqsupseteq \psi(
  \gamma(x),\aindex)$ for all $x$ and each $e_i = \ell_i\eta$; by the
  $\psi$-decrease property each $\eta(x) \sqsupseteq \psi(\gamma(x),
  \aindex \cdot 1)$ as well.
  Noting that $\bindex \succ \aindex$ implies $\bindex \succ
  \aindex \cdot 1$, we apply the induction hypothesis (part
  (\ref{lem:toocomplete:term}), with $\aindex' = \aindex \cdot 1$)
  to obtain
  $o \sqsupseteq \psi(w,\bindex)$ such that $\prog,\eta \vvdash s
  \too o$.  Then $\prog \vvdashcall \apps{\identifier{f}}{e_1}{e_n}
  \too o$ by [Call].
\item Suppose $n < \arity(\identifier{f})$. Then the conclusion can
  only be obtained by [Closure], so $w = \apps{\identifier{f}}{v_1}{
  v_n}$.
  Write $\psi(w,\bindex) = U_{\atype_{n+1} \arrtype \dots \arrtype
  \atype_m \arrtype \asortorpair}$.
  Then for each $(e_{n+1},\dots,e_m,u) \in U$ there are some
  $\dindex \succ \cindex \geq \bindex$ such that the subtree at
  position $\cindex$ is an $n$-derivation for
  $\apps{\identifier{f}}{v_1}{v_m} \leadsto a$, where $v_{n+1},\dots,
  v_m$ are such that $e_i \sqsupseteq \psi(v_i,\cindex \cdot 1^{m-i}
  \cdot 3)$ for $n < i \leq m$ and $u = \psi(a,\dindex)$.
  By the $\psi$-decrease property ($\cindex \geq \bindex \succ
  \aindex$ implies $\cindex > \aindex$) we have
  $e_i \sqsupseteq \psi(v_i,\cindex)$ for $1 \leq i \leq n$.
  Thus, we can apply the induction hypothesis (part
  (\ref{lem:toocomplete:fakecall}), with $\aindex' = \cindex$ and
  $\bindex' = \dindex$) to find $u' \sqsupseteq u$ such that $\prog
  \vvdashcall \apps{\identifier{f}}{e_1}{e_m} \too u'$.
  Now let $O$ be the set of all thus obtained tuples $(e_{n+1},\dots,
  e_m,u')$, and let $o = O_{\atype_{n+1} \arrtype \dots \arrtype
  \atype_m \arrtype \asortorpair}$.
  Then $\prog \vvdashcall \apps{\identifier{f}}{e_1}{e_n} \too o$ by
  [Value], and $o \sqsupseteq \psi(w,\bindex)$ because each $u'
  \sqsupseteq u$.
\end{itemize}

Next (for (\ref{lem:toocomplete:fakecall})), suppose the subtree
labeled $\aindex$ is a $j$-derivation for $\apps{\identifier{f}}{
v_1}{v_n} \leadsto w$.  Let $e_1,\dots,e_n$ be such that $e_i
\sqsupseteq \psi(v_i,\aindex)$ if $i \leq j$ and $e_i \sqsupseteq
\psi(v_i,\aindex \cdot 1^{n-i} \cdot 3)$ if $i > j$.
Write $\prog,\gamma \vdash s \arrr w$ for the root of the subtree at
position $\aindex$, and let $\bindex \succ \aindex$.
There are three cases:
\begin{itemize}
\item Suppose $j = n$.  By definition, this can only occur if $n = 0$
  and $s = \identifier{f}$, and the immediate premise is necessarily
  $\prog \vdashcall \identifier{f} \arrr w$ by [Function].
  We obtain $o \sqsupseteq \psi(w,\bindex)$ with $\prog \vvdashcall
  \identifier{f} \too o$ by the induction hypothesis
  (part (\ref{lem:toocomplete:truecall}), with $\aindex' = \aindex
  \cdot 1$).
\item Suppose $j < n \leq \arity(\identifier{f})$.  Then the root
  follows by [Appl] and the premise at position $\aindex \cdot 3$ can
  only be $\prog \vdashcall \apps{\identifier{f}}{v_1}{v_n} \arrr w$.
  Again we immediately complete with the induction hypothesis (part
  (\ref{lem:toocomplete:truecall})) because $\bindex \succ \aindex$
  also implies $\bindex \succ \aindex \cdot 3$.
%  (since also $\bindex \succ \aindex$ implies $\bindex \succ \aindex
%  \cdot 2$).
\item Suppose $j \leq \arity(\identifier{f}) < n$.  Write $k :=
  \arity(\identifier{f})$.  Then there exist $w_k,\dots,w_n = w$ such
  that:
  \begin{enumerate}
  \item letting $z := 3$ if $j < \arity(\identifier{f})$ and $z := 1$
    otherwise, the premise at position $\aindex \cdot 1^{n-k} \cdot z$
    is $\prog \vdashcall \apps{\identifier{f}}{v_1}{v_k} \arrr w_k$,
    obtained by [Call];
  \item
    the premise at position $\aindex \cdot 1^{n-k} \cdot z \cdot 1$ is
    $\prog,\delta \vdash t \arrr w_k$ for
    $\apps{\identifier{f}}{\ell_1}{\ell_k} = t$ the first clause
    matching $\apps{\identifier{f}}{v_1}{v_k}$ and $\delta$ such that
    each $v_i = \ell_i\delta$;
  \item\label{lem:toocomplete:derivation:vi}
    for $k < i \leq n$ the premise at position $\aindex \cdot 1^{n-i}
    \cdot 2$ is $\prog,\gamma \vdash s_i \arrr v_i$;
  \item\label{lem:toocomplete:derivation:wi}
    for $k < i \leq n$ the premise at position $\aindex \cdot 1^{n-i}
    \cdot 3$ is $\prog \vdashcall w_{i-1}\ v_i \arrr w_i$;
  \newcounter{enumtmp}
  \setcounter{enumtmp}{\theenumi}
  \end{enumerate}
  Moreover, we may conclude that:
  \begin{enumerate}
  \setcounter{enumi}{\theenumtmp}
  \item $t$ cannot contain variables of arrow type, as the clause
    $\apps{\identifier{f}}{\ell_1}{\ell_k} = t$ has arrow type (we
    have assumed this at the start of Appendix~\ref{app:alterview}).
  \item
    for $1 \leq i \leq k$ we have $e_i \sqsupseteq \psi(v_i,\aindex
    \cdot 1^{n-k} \cdot z \cdot 1)$ by the $\psi$-decrease property
    since:
    \begin{itemize}
    \item $\aindex < \aindex \cdot 1^{n-k} \cdot z \cdot 1$ (giving
      the case for $i \leq j$)
    \item $\aindex \cdot 1^{n-i} \cdot 3 < \aindex \cdot 1^{n-k}
      \cdot 3 \cdot 1$ if $i \leq k$ (and $z = 3$ if there is $j < i
      \leq k$);
    \end{itemize}
  \item writing $\aindex_0' := \aindex \cdot 1^{n-k} \cdot z \cdot 1$,
    we obtain from \cite[Lemma E.12]{kop:sim:17} some $\eta$ such that
    $\eta(x) \sqsupseteq \psi(\gamma(x),\aindex_0')$ for all $x$ in
    the domain of $\gamma$, and $e_i = \ell_i\eta$ for $1 \leq i \leq
    k$;
  \item writing $\aindex_{i-k}' := \aindex \cdot 1^{n-i} \cdot 3$ and
    $w_{i-k}' := w_i$ and $v_{i-k}' := v_i$ for $k+1 \leq i \leq n$,
    the subtree at position $\aindex_i'$ is $\prog \vdashcall
    w_{i-1}'\ v_i' \arrr w_i$ for $1 \leq i \leq n-k$ and
    $e_i' := e_{k+i} \sqsupseteq \psi(v_i',\aindex_i')$.
  \end{enumerate}
  Observing that moreover $\bindex \succ \aindex$ implies $\bindex
  \succ \aindex_i'$ for all extensions $\aindex_i'$ of $\aindex$, we
  can apply the induction hypothesis (part
  (\ref{lem:toocomplete:termapply})) to obtain $o \sqsupseteq
  \psi(w_{n-k}',\bindex)$ such that $\prog,\eta \cup [x_{k+1}:=
  e_{k+1},\dots,x_n:=e_n] \vvdashcall t \circ x_{k+1} \cdots x_n \too
  o$.  Since $w_{n-k}' = w_n = w$, we thus have, by [Call], that
  $\prog \vvdashcall \apps{\identifier{f}}{e_1}{e_n} \too o
  \sqsupseteq \psi(w,\bindex)$.
\end{itemize}

Next (for (\ref{lem:toocomplete:term})), suppose the subtree labeled
$\aindex$ has a root $\prog,\gamma \vdash s \arrr w$, that $\eta$ is a
set-environment with each $\eta(x) \sqsupseteq \psi(\gamma(x),
\aindex)$ and that $\bindex \succ \aindex$.  Consider the form of $s$.
\begin{itemize}
\item Suppose $s = \apps{\identifier{c}}{s_1}{s_m}$ with
  $\identifier{c}$ a constructor.  Then the conclusion can only follow
  by [Constructor], and $w = \apps{\identifier{c}}{b_1}{b_m}$.
  By~\cite[Lemma B.7]{kop:sim:17}, each $s_i\gamma = b_i \in \Data$.
  Then necessarily each relevant $\gamma(x) \in \Data$, so $\gamma(x)
  = \eta(x)$ for $x \in \Var(s)$.  Thus, also $b_i = s_i\eta$ and we
  obtain $\prog,\eta \vvdash s \too w = \psi(w,\bindex)$ by
  [Constructor].
\item Suppose $s = x \in \V$, so $w = \gamma(x)$.  Taking $o :=
  \eta(x)$, we have $\prog,\eta \vvdash s \too o$ by [Variable] and
  $o \sqsupseteq \psi(w,\bindex)$ by the $\psi$-decrease property.
\item Suppose $s = \apps{x}{s_1}{s_m}$ with $x \in \V$ and $m > 0$.
  Let $\gamma(x) = \apps{\identifier{f}}{v_1}{v_n}$.
  Inspecting the shape of the subtree at position $\aindex$, there are
  $w_1,\dots,w_m$ such that the subtree at position $\aindex$ is an
  $n$-derivation for $\apps{\apps{\identifier{f}}{v_1}{v_n}}{w_1}{w_m
  }$ (with the subtree at position $\aindex \cdot 1^{n-i} \cdot 2$
  having a root $\prog,\gamma \vdash s_i \arrr w_i$ for $1 \leq i \leq
  m$).

  For $1 \leq i \leq n$, we note that $\eta(x) \sqsupseteq \psi(
  \gamma(x),\aindex \cdot 1^{n-i} \cdot 2)$ by the $\psi$-decrease
  property.  Thus, using the induction hypothesis (part
  (\ref{lem:toocomplete:term}), with $\aindex' = \aindex \cdot
  1^{n-i} \cdot 2$ and $\bindex' = \aindex \cdot 1^{n-i} \cdot 3$) we
  find $e_1,\dots,e_n$ such that $\prog,\eta \vvdash s_i \too e_i
  \sqsupseteq \psi(w_i,\aindex \cdot 1^{n-i} \cdot 3)$ for all $i$.

  Since $\prog,\gamma \vdash s \arrr w$ occurs in the derivation tree,
  $s$ must be a sub-expression of the right-hand side of some clause,
  so by the restriction on immutable functions, the type of $s$ is not
  an arrow type; thus, $\eta(x) = E_{\atype_1 \arrtype \dots \arrtype
  \atype_m \arrtype \asortorpair}$ where $E$ is a set containing
  tuples of length $n+1$.
  In particular, since $\eta(x) \sqsupseteq \psi(\gamma(x),\aindex)$
  and $\aindex \geq \aindex$, $\eta(x)$ contains a tuple $(e_1,
  \dots,e_m,o)$ with $o \sqsupseteq \psi(w,\bindex)$.
  We have $\prog,\eta \vvdash s \too o$ by [Variable].
\item Suppose $s = (s_1,s_2)$.  Then $w = (w_1,w_2)$ and for $i \in
  \{1,2\}$ the subtree at $\aindex \cdot i$ has root $\prog,\gamma
  \vdash s_i \arrr w_i$.  Using the induction hypothesis (part
  (\ref{lem:toocomplete:term}), with $\aindex' = \aindex \cdot i$ and
  $\bindex' = \bindex$) we obtain $o_i \sqsupseteq \psi(w_i,\bindex)$
  such that $\prog,\eta \vvdash s_i \too o_i$.  Then $\prog,\eta
  \vvdash s \too (o_1,o_2) \sqsupseteq (\psi(w_1,\bindex),\psi(w_2,
  \bindex)) = \psi(w,\bindex)$ as well.
\item Suppose $s = \apps{\choice}{s_1}{s_m}$.  Then the subtree at
  $\aindex \cdot 1$ is $\prog,\gamma \vdash s_i \arrr w$ for some $i$,
  we obtain $\prog,\eta \vvdash s_i \too o \sqsupseteq \psi(w,
  \bindex)$ by the induction hypothesis (part
  (\ref{lem:toocomplete:term})) and therefore $\prog,\eta \vvdash s
  \too o$ by [Choice].
\item Suppose $s = \ifte{s_1}{s_2}{s_3}$ (since $s$ is a
  sub-expression of the right-hand side of a clause, we do not need
  to consider \texttt{if}s at the head of an application).  As
  $\prog,\gamma \vdash s \arrr w$ can only be derived using
  [Conditional], the sub-tree at position $\aindex \cdot 1$ is either
  $\prog,\gamma \vdash s_1 \arrr \strue$ or $\prog,\gamma \vdash s_1
  \arrr \sfalse$.  By the induction hypothesis (part
  (\ref{lem:toocomplete:term}), with $\aindex' = \aindex \cdot 1$ and
  $\bindex' = \aindex \cdot 2$) also $\prog,\eta \vdash s_1 \too
  \strue$ or $\prog,\eta \vdash s_1 \too \sfalse$ respectively (since
  $\psi(b,\bindex') = b$ for $b \in \Data$).  Depending on which is
  chosen, the induction hypothesis (part (\ref{lem:toocomplete:term}),
  with $\aindex' = \aindex \cdot 2 \cdot 1$ and $\bindex' = \bindex$)
  provides $o \sqsupseteq \psi(w,\bindex)$ such that $\prog,\eta
  \vvdash s_2 \too o$ or $\prog,\eta \vvdash s_3 \too o$ respectively;
  we have $\prog,\eta \vvdash s \too o$ by [Cond-True] or
  [Cons-False].
\item Suppose $s = \apps{\identifier{f}}{s_1}{s_n}$.  Investigating
  the shape of the subtree, it is a $0$-derivation for
  $\apps{\identifier{f}}{v_1}{v_n} \leadsto w$ for some $v_1,\dots,
  v_n$.  For $1 \leq i \leq n$, the subtree at $\aindex \cdot 1^{n-i}
  \cdot 2$ has root $\prog,\gamma \vdash s_i \arrr v_i$ and therefore,
  by the $\psi$-decrease property and the induction hypothesis (part
  (\ref{lem:toocomplete:term})), $\prog,\eta \vvdash s_i \too e_i
  \sqsupseteq \psi(v_i,\aindex \cdot 1^{n-i} \cdot 3)$.
  As we have seen in case (\ref{lem:toocomplete:fakecall}), there is
  therefore $o \sqsupseteq \psi(w,\bindex)$ such that $\prog
  \vvdashcall \apps{\identifier{f}}{e_1}{e_n} \too o$.  We have
  $\prog,\eta \vvdash s \too o$ by [Func].
\end{itemize}

Finally (for (\ref{lem:toocomplete:termapply})), suppose the subtree
labeled $\aindex$ has a root $\prog,\gamma \vdash s \arrr w$ with $s$
of a functional type $\atype_1 \arrtype \dots \arrtype \atype_n
\arrtype \btype$ ($n > 0$), that $s$ does not contain any variables
of functional type, and we have $\aindex_n \succ \dots \succ \aindex_0
= \aindex$ and $w_n,\dots,w_0 = w$ and $v_n,\dots,v_1$ such that
$\bindex \succ \aindex_n$ and the subtree at position $\aindex_i$ for
$i > 0$ has root $\prog \vdashcall w_{i-1}\ v_i \arrr w_i$.
Let $\eta$ be a set-environment such that $\eta(x) \sqsupseteq \psi(
\gamma(x),\aindex)$ for all $x$, and let $e_1,\dots,e_n$ be such that
each $e_i \sqsupseteq \psi(v_i,\aindex_i)$.  Denote $\eta^{\vec{e}} =
\eta \cup [x_1:=e_1,\dots,x_n:=e_n]$.  We must find $o$ such that
$\prog,\eta^{\vec{e}} \vvdash s \circ x_1 \cdots x_n \too o
\sqsupseteq \psi(w_n,\bindex)$.
To this end, consider the form of $s$.
\begin{itemize}
\item We cannot have $s = (s_1,s_2)$ or $s = x \in \V$ since $s$ has
  functional type (and contains no variables of functional type).
  Also not $s = \apps{\identifier{c}}{s_1}{s_m}$ since $s$ is a
  sub-expression of the right-hand side of a rule, and constructors
  must always occur fully applied.
\item We cannot have $s = \apps{x}{s_1}{s_m}$ with $m > 0$, since $s$
  contains no variables of functional type.
\item If $s = \ifte{s_1}{s_2}{s_3}$ or $s = \apps{\choice}{s_1}{s_m}$,
  then $s \circ x_1 \cdots x_n$ is $\ifte{s_1}{(s_2 \circ x_1 \cdots
  x_n)}{(s_3 \circ x_1 \cdots x_n)}$ or $\apps{\choice}{(s_1 \circ
  x_1 \cdots x_n)}{(s_m \circ x_1 \cdots x_n)}$ respectively.  We
  easily complete with the induction hypothesis as we did in part
  (\ref{lem:toocomplete:term}) -- the corresponding sub-expressions
  have the same type and also no functional variables.
\item This leaves only $s = \apps{\identifier{f}}{s_1}{s_j}$.
  Following the shape of the derivation, there are values $a_1,\dots,
  a_j$ such that the root of the subtree labeled $\aindex \cdot
  1^{j-i} \cdot 2$ is $\prog,\gamma \vdash s_i \arrr a_i$ for $1
  \leq i \leq j$.  There are four possibilities:
  \begin{itemize}
  \item $j + n \leq \arity(\identifier{f})$: then $w_{n-1} =
    \apps{\apps{\identifier{f}}{a_1}{a_j}}{v_1}{v_{n-1}}$, and the
    subtree at position $\aindex_n$ is a derivation for $\prog
    \vdashcall \apps{\apps{\identifier{f}}{a_1}{a_j}}{v_1}{v_n}
    \arrr w_n$.  As $\aindex_n \succ \aindex$ and therefore also
    $\aindex_n \succ \aindex \cdot 1^{j-i} \cdot 2$ for all $i$, the
    induction hypothesis (part (\ref{lem:toocomplete:term})) provides
    $u_1,\dots,u_j$ such that $\prog,\eta \vvdash a_i \too u_i
    \sqsupseteq \psi(a_i,\aindex_n)$ for $1 \leq i \leq j$.  By the
    $\psi$-decrease property ($\aindex_n \geq \aindex_i$ for all $1
    \leq i \leq n$) also each $e_i \sqsupseteq \psi(v_i,\aindex_n)$.

    Therefore, by the induction hypothesis (part
    (\ref{lem:toocomplete:truecall})) we obtain $o \sqsupseteq \psi(
    w_n,\bindex)$ such that $\prog \vvdashcall
    \apps{\apps{\identifier{f}}{u_1}{u_j}}{e_1}{e_n} \too o$.
    But we also have $\prog,\eta^{\vec{e}} \vvdash s_i \too u_i$ for
    $1 \leq i \leq j$ because adding unused variables to the
    substitution does not fundamentally alter a derivation, and
    $\prog,\eta^{\vec{e}} \vvdash x_i \too e_i$ for $1 \leq i \leq n$
    by [Variable].
    We thus conclude $\prog,\eta^{\vec{e}} \vvdash
    \apps{\apps{\identifier{f}}{s_1}{s_i}}{x_1}{x_n} \too o$ by
    [Func].
  \item $0 < \arity(\identifier{f}) \leq j$: write $k := \arity(
    \identifier{f})$.  Then we can find $b_k,\dots,b_j = w_0$ such
    that the subtree at position $\aindex \cdot 1^{j-k} \cdot 3$ has
    root $\prog \vdashcall \apps{\identifier{f}}{a_1}{a_k} \arrr
    b_k$ (also if $k=j$!) and the subtree at position $\aindex
    \cdot 1^{j-i} \cdot 3$ has root $\prog \vdashcall b_{i-1}\ a_i
    \arrr b_i$ for $k < i \leq j$.
    Let $\aindex' = \aindex \cdot 1^{j-k} \cdot 3 \cdot 1$.  By the
    induction hypothesis (and the observation that we can always add
    unused variables to an environment), we find $u_1,\dots,u_j$ such
    that $\prog,\eta^{\vec{e}} \vvdash s_i \too u_i$ for each $1 \leq
    i \leq j$ and moreover $u_i \sqsupseteq \psi(a_i,\aindex')$ if $i
    \leq k$ and $u_i \sqsupseteq \psi(a_i,\aindex \cdot 1^{j-i} \cdot
    3)$ otherwise.

    Now, the subtree at position $\aindex'$ has root $\prog,\delta
    \vdash t \arrr b_k$ where $\apps{\identifier{f}}{\ell_1}{\ell_k}
    = t$ is the first matching clause, and $\delta$ is the environment
    such that $\ell_i\delta = a_i$ for $1 \leq i \leq k$.
    By~\cite[Lemma E.12]{kop:sim:17}, we obtain $\zeta$ such that
    $\ell_i\zeta = e_i$ for $1 \leq i \leq k$ and $\zeta(x)
    \sqsupseteq \psi(t,\aindex')$ for all relevant $x$.  As this
    clause has functional type, $t$ may be assumed not to contain
    variables of functional type.  Since moreover
    $\bindex \succ \aindex_n \succ \dots \succ \aindex_1 \succ
    \aindex \cdot 1^0 \cdot 3 \succ \dots \succ \aindex \cdot 1^{j-k
    +1} \cdot 3 \succ \aindex'$, we can use the induction hypothesis
    (part (\ref{lem:toocomplete:termapply})) to find $o \sqsupseteq
    \psi(w_n,\bindex)$ such that $\prog,\zeta \cup [y_{k+1}:=u_{k+1},
    \dots,y_j:=u_j,x_1:=e_1,\dots,x_n:=e_n] \vvdash \apps{t \circ
    y_{k+1} \cdots y_j}{x_1}{x_n} \too o$.

    Now, we obtain $\prog \vvdashcall \apps{\apps{\identifier{f}}{u_1
    }{u_j}}{e_1}{e_n} \too o$ by [Call].  As we have $\eta^{\vec{e}}
    \vvdash s_i \too u_i$ for each $1 \leq i \leq j$ from before, and
    $\eta^{\vec{e}} \vvdash x_i \too e_i$ by [Variable], we obtain
    $\prog,\eta^{\vec{e}} \vvdash \apps{\apps{\identifier{f}}{s_1}{
    s_j}}{x_1}{x_n} \too o$ by [Func].
  \item $0 = \arity(\identifier{f}) \leq j$: almost exactly as the
    previous case, except $\aindex' = \aindex \cdot 1^j \cdot 1 \cdot
    1$; this is not a problem because there are no $a_i$ with $i <
    k = 0$.
  \item $j < \arity(\identifier{f}) < j+n$: write $k := \arity(
    \identifier{f})-j$.  Then $w_i = \apps{\apps{\identifier{f}}{a_1
    }{a_j}}{v_1}{v_i}$ for $1 \leq i < k$ and the subtree at position
    $\aindex_k$ has root $\prog \vdashcall \apps{\apps{\identifier{f
    }}{a_1}{a_j}}{v_1}{v_k} \arrr w_k$ by [Call].  Since $\aindex_k
    \succ \aindex$ implies $\aindex_k \succ \aindex \cdot 1^{j-i}
    \cdot 2$ for all $1 \leq i \leq j$, we obtain $u_1,\dots,u_j$ such
    that $\prog,\eta^{\vec{e}} \vvdash u_i \sqsupseteq \psi(a_i,
    \aindex_k)$ by the induction hypothesis (and the observation that
    unused variables may always be added to an environment).  By the
    $\psi$-decrease property and [Variable], $\prog,\eta^{
    \vec{e}} \vvdash e_i \sqsupseteq \psi(v_i,\aindex_k)$ for $1 \leq
    i \leq k$.

    Let $\aindex' = \aindex_k \cdot 1$.  The subtree at position
    $\aindex'$ has root $\prog,\delta \vdash t \arrr w_k$ for
    $\apps{\identifier{f}}{\ell_1}{\ell_{j+k}} = t$ the first clause
    matching $\apps{\apps{\identifier{f}}{a_1}{a_j}}{v_1}{v_k}$ and
    $\delta$ such that $\ell_i\delta = a_i$ if $i \leq j$ and $\ell_i
    \delta = v_{i-j}$ if $j < i \leq j+k$.
    B~\cite[Lemma E.12]{kop:sim:17}, we obtain $\zeta$ such that
    $\ell_i\zeta = u_i$ or $\ell_i\zeta = e_{i-k}$ as needed, and
    each $\zeta(x) \sqsupseteq \psi(\delta(x),\bindex)$.  Since the
    clause still has functional type (as $j < n$), $t$ contains no
    functional variables.
    
    Thus, the induction hypothesis provides $o$ such that $\prog,\zeta
    \cup [x_{k+1}:=e_{k+1},\dots,x_n:=e_n] \vvdash t \circ x_{k+1}
    \cdots x_n \too o \sqsupseteq \psi(w_n,\bindex)$.  By [Call],
    $\prog \vvdashcall \apps{\apps{\identifier{f}}{u_1}{u_j}}{e_1}{e_n
    } \too o$ and by [Func] we obtain $\prog,\eta^{\vec{e}} \vvdash
    \apps{\apps{\identifier{f}}{s_1}{s_j}}{x_1}{x_n} \too o$.
  \end{itemize}
\end{itemize}
\end{proof}

\section{The basic algorithm}

We prove correctness in all aspects of Algorithm~\ref{alg:base}.

\subsection{Soundness and completeness}

For soundness, we handle both algorithms at once.

\begin{lemma}\label{lem:soundness}
If Algorithm~\ref{alg:base} or Algorithm~\ref{alg:np} returns $b$,
then $\progresult$.
\end{lemma}

\begin{proof}
We prove the following two statements:
\begin{itemize}
\item if $\vdash \apps{\identifier{f}}{e_1}{e_i} \mapsto o$ is
  confirmed, then $\prog \vvdashcall \apps{\identifier{f}}{e_1}{e_i}
  \too o$
\item if $\eta \vdash s \mapsto o$ is confirmed, then $\prog,\eta
  \vvdashcall s \too o$.
\end{itemize}
If either algorithm returns $b$, then $\apps{\identifier{f}_1}{d_1}{
d_M} \mapsto b$ is confirmed, so $\prog \vvdashcall \apps{\identifier{
f}_1}{d_1}{d_M} \too b$ may be derived.  By Lemma~\ref{lem:toosound},
$\progresult$ follows.

We prove the statements together by induction on the step in the
algorithm where the statement is derived.
For the first statement, this can only be step~\ref{alg:call}.  Thus,
by the induction hypothesis the first clause matching
$\apps{\identifier{f}}{e_1}{e_k}$ is
$\apps{\identifier{f}}{\ell_1}{\ell_k}$ with environment $\eta'$
such that for $\eta := \eta' \cup [x_{k+1}:=e_{k+1},\dots,x_n:=e_n]$
the statement $\eta \vdash s \circ x_{k+1} \cdots x_n \mapsto o$ is
confirmed.  By the induction hypothesis, $\prog,\eta \vvdash s \circ
x_{k+1} \cdots x_n \too o$.  By [Call], $\prog \vvdashcall
\apps{\identifier{f}}{e_1}{e_n} \too o$ as required.  Therefore we
only need to consider the second statement.
\begin{itemize}
\item If $\eta \vdash s \mapsto o$ is confirmed in the base step, then
  $s\eta = o$, which is only possible if $s$ is a variable or
  constructor term; we have $\prog,\eta \vvdash s\too o$ by [Variable]
  or [Constructor] respectively.
\item If $\eta \vdash s \mapsto o$ is confirmed by case
  (\ref{alg:var})--(\ref{alg:choice}), we obtain $\prog,\eta \vvdash s
  \too o$ by the induction hypothesis combined with [Variable],
  [Pair], [Cond-True] or [Cond-False], or [Choice] respectively.
\item If $\eta \vdash s \mapsto o$ is confirmed by case
  (\ref{alg:func}), then $s = \apps{\identifier{f}}{s_1}{s_n}$.
  If $n \geq \arity(\identifier{f})$, we obtain $\prog \vvdashcall
  \apps{\identifier{f}}{e_1}{e_n} \too o$ by the induction hypothesis
  (for the first point) and therefore $\prog,\eta \vvdash \apps{
  \identifier{f}}{s_1}{s_n} \too o$ by [Func].
  If $n < \arity(\identifier{f})$, then $o = O_\atype$ and $\prog
  \vvdashcall \apps{\identifier{f}}{e_1}{e_m} \too u$ for all
  $(e_{n+1},\dots,e_m,u) \in O$ by the induction hypothesis, so
  $\prog \vvdashcall \apps{\identifier{f}}{e_1}{e_n} \too o$ by
  [Value], and therefore $\prog,\eta \vvdash s \too o$ by [Func].
\end{itemize}
\end{proof}

For completeness, we use an intermediate result that will be used
both for completeness of Algorithm~\ref{alg:base} and
Algorithm~\ref{alg:np}.

\begin{lemma}\label{lem:algbasecomplete:help}
If there is a derivation of $\progresult$ which uses only
extensional values in some set $\Xi$, then the variation of
Algorithm~\ref{alg:base} which excludes statements using extensional
values not in $\Xi$ returns a set containing $b$.
\end{lemma}

\begin{proof}
If $\progresult$, then $\prog \vvdashcall \apps{\identifier{f}_1}{
d_1}{d_M} \too b$ by Lemma~\ref{lem:toocomplete}.  As $b$ is in the
returned set if $\apps{\identifier{f}_1}{d_1}{d_M} \mapsto b$ is
eventually confirmed, it suffices to show the first of the
following two statements:
\begin{itemize}
\item If $\prog \vvdashcall \apps{\identifier{f}}{e_1}{e_n} \too o$
  and $n \geq \arity(\identifier{f})$ then $\vdash \apps{\identifier{
  f}}{e_1}{e_n} \mapsto o$ is confirmed.
\item If $\prog,\eta \vvdash s \too o$ then $\eta \vdash s \mapsto o$
  is confirmed.
\end{itemize}
We will prove these statements together, \emph{only} for $\vvdash$-%
statements which occur in the derivation of $\apps{\identifier{f}_1}{
d_1}{d_M} \too b$.  Note first that indeed all such statements are
noted down at the start of the algorithm: if a statement occurs in the
derivation tree then by definition all $e_i$, $o$ and $\eta(x)$ are in
$\Xi$, and if $\prog,\eta \vvdash s \too o$ occurs then $s$ is a
sub-expression of $t \circ x_{k+1} \cdots x_n$ for some clause
$\apps{\identifier{f}}{\ell_1}{\ell_k} = t$ such that each
$\ell_i\eta$ and $x_j\eta$ occurs in the tree, so is in $\Xi$.

Now, if $\prog \vvdashcall \apps{\identifier{f}}{e_1}{e_n} \too o$
occurs in the derivation and $n \geq \arity(\identifier{f})$, then the
immediate premise is $\prog,\eta \vvdash s \circ x_{k+1} \cdots x_n$
for $\apps{\identifier{f}}{\ell_1}{\ell_k} = s$ the first clause
matching $\apps{\identifier{f}}{e_1}{e_k}$ and $\eta$ the
corresponding environment.  By the induction hypothesis, $\eta \vdash
s \mapsto o$ is eventually confirmed.  Then $\vdash \apps{\identifier{
f}}{e_1}{e_n} \too o$ is confirmed in case (\ref{alg:call}) in the
next step.

If $\prog,\eta \vvdash s \too o$ is obtained by [Constructor] or the
first case of [Variable], then $s\eta = o$ so $\vdash s \mapsto o$ is
confirmed in the preparation phase of the algorithm.  If $\prog,\eta
\vvdash s \too o$ is obtained by [Variable] otherwise, then by the
induction hypothesis each $\eta \vdash s_i \mapsto e_i$ is eventually
confirmed, so in the step after the last of these, $\eta \vdash s
\mapsto o$ is marked confirmed in case (\ref{alg:var}).

If $\prog,\eta \vvdash s \too o$ is obtained by [Pair], [Choice],
[Cond-True], or [Cond-False] respectively, we similarly complete with
the induction hypothesis and case (\ref{alg:pair}),
(\ref{alg:choice}), (\ref{alg:ifte}) or (\ref{alg:ifte}) respectively.

Finally, if $\prog,\eta \vvdash s \too o$ is obtained by [Func], then
$s = \apps{\identifier{f}}{s_1}{s_n}$ and the induction hypothesis
provides $e_1,\dots,e_n$ such that $\eta \vdash s_i \mapsto e_i$ is
eventually confirmed for $1 \leq i \leq n$.  If $n \geq
\arity(\identifier{f})$ then we complete with the first part of the
induction hypothesis and case (\ref{alg:func}a).  If $n <
\arity(\identifier{f})$, then the rightmost premise is obtained by
[Value], so we can write $o = O_\atype$ and for every $(e_{n+1},\dots,
e_m,u) \in O$ the statement $\vdash \apps{\identifier{f}}{e_1}{e_m}
\mapsto u$ is eventually confirmed by the first part of the induction
hypothesis; we complete with case (\ref{alg:func}b).
\end{proof}

This makes the completeness lemma obvious:

\begin{lemma}\label{lem:algbasecomplete}
If $\progresult$ then Algorithm~\ref{alg:base} returns a set
containing $b$.
\end{lemma}

\begin{proof}
By Lemma~\ref{lem:algbasecomplete:help} using $\Xi = \bigcup \{
\interpret{\atype} \mid \atype$ a type$\}$.
\end{proof}

\subsection{Complexity}

\end{document}